\newcommand{\<}{\langle}
\renewcommand{\>}{\rangle}
\newcommand\be{\begin{equation}}
\newcommand\ee{\end{equation}}
\newcommand\ot{\otimes}
\newcommand\cH{\mathcal{H}}
\newcommand\sH{\operatorname{Hom}}
\newcommand{\opV}{\operatorname{V}}
\newcommand{\mult}{\operatorname{mult}}
\newcommand{\opX}{\operatorname{X}}
\newcommand\hcal{\cH}
\newcommand\bea{\begin{array}}
\newcommand\eea{\end{array}}
\newcommand\ben{\begin{eqnarray}}
\newcommand\een{\end{eqnarray}}
\newcommand\bei{\begin{itemize}}
\newcommand\eei{\end{itemize}}
\newcommand\bee{\begin{enumerate}}
\newcommand\eee{\end{enumerate}}
\newcommand{\tr}{{\rm Tr}}
\def\ot{\otimes}
\def\bei{\begin{itemize}}
\def\eei{\end{itemize}}
\newtheorem{theorem}{Theorem}
\newtheorem{fact}[theorem]{Fact}
\newtheorem{lemma}[theorem]{Lemma}
\newtheorem{corollary}[theorem]{Corollary}
\newtheorem{example}[theorem]{Example}
\newtheorem{proposition}[theorem]{Proposition}
\newtheorem{definition}[theorem]{Definition}
\newtheorem{notation}[theorem]{Notation}
\newtheorem{remark}[theorem]{Remark}
\newtheorem*{rep@theorem}{\rep@title}
\newcommand{\newreptheorem}[2]{%
\newenvironment{rep#1}[1]{%
 \def\rep@title{#2 \ref{##1} (restatement)}%
 \begin{rep@theorem}}%
 {\end{rep@theorem}}}
\newcommand{\opP}{\operatorname{P}}
\newcommand{\opE}{\operatorname{E}}
\newcommand{\opY}{\operatorname{Y}}
\newcommand{\opZ}{\operatorname{Z}}
\newcommand{\w}{\widetilde{v}}
\newcommand{\ww}{\widetilde{\omega}}
\par\addvspace{\medskipamount}\noindent\textbf{Examples.}\hspace{1ex}}%
\begin{document}
\reversemarginpar

\title{Commutant structuture of $ \text{U}^{\ot (n-1)}\ot \text{U}^*$ transformations}

\author{Micha{\l} Studzi{\'n}ski$^{1,2}$, Micha{\l} Horodecki$^{1,2}$ and Marek Mozrzymas$^3$}

\affiliation{
$^1$Institute for Theoretical Physics and Astrophysics,
University of Gda{\'n}sk, 80-952 Gda{\'n}sk, Poland\\
$^2$National Quantum Information Centre of Gda\'{n}sk, 81-824 Sopot, Poland\\
$^3$Institute for Theoretical Physics,
University of Wroc{\l}aw, 50-204 Wroc{\l}aw, Poland}

\date{\today}


\begin{abstract}
In this paper we have found irreducible representations (irreps) of the algebra of partially transposed permutation operators on last subsystem. We give here direct method of irreps construction for an arbitrary number of subsystems $n$ and local dimension $d$. Our method is inspired by representation theory of symmetric group $S(n)$, theory of Brauer Algebras and Walled Brauer Algebras. 
\end{abstract}

\keywords{PPT criterion, symmetric group, irreducible representation, Brauer Algebra, Walled Brauer Algebra}

\maketitle 

\section{Introduction}

Partial transposition plays important role in quantum information theory. It gives us necessary and sufficient conditions for separability for bipartite systems in $2 \times 2$ and $2 \times 3$ dimension case~\cite{Horodecki1, Peres1} and it is called PPT criterion (\textsl{Positive Partial Transpose}). Namely if spectrum of partially transposed bipartite density operator is positive our state is separable. For other cases (higher dimensions and multiparty case) partial transposition gives only necessary condition for separability of states, but still is very strong tool. Separability problem has been intensively studied also of multipartite systems~\cite{Horodecki2}. There are many papers about analysis of PPT properties for states with special symmetries. In particular Eggeling and Werner in~\cite{Eggeling1} present result on separability properties for tripartite states which are $U^{\ot 3}$ invariant using PPT  property and tools from group theory for an arbitrary dimension of subsystem space. 
In~\cite{Tura1, Augusiak1} authors present solution on open problem of existence of four-qubit entangled symmetric states with positive partial transposition and generalize them to systems consisting an arbitrary number of qubits. In particular, they provide criteria for separability of such states formulated in terms of their ranks. PPT property turned out also relevant for a problems in computer science: it is relaxation of some complexity problem, which can be written in terms of separability~\cite{Montanaro1,Brandao1}.

As one knows, if the system possess some symmetries, it can simplify a lot analysis of its properties. A natural tool is here group representation theory:  e.g. operators which are $U^{\ot n}$ invariant can be decomposed  by means of Schur-Weyl duality (see~\cite{Audenaert2006-notes},~\cite{RWallach} and Section~\ref{sec:young} of our work) into irreducible representations (irreps) of symmetric group $S(n)$. Then finding spectra by means of such decomposition becomes much more efficient and easier.
Let us mention here that Schur-Weyl duality and concept of Schur basis, were successfully applied in quantum information, in particular, it was used for qubit purification problems~\cite{springerlink:10.1007/PL00001027,PhysRevLett.82.4344,PhysRevA.84.022320}, efficient quantum circuits~\cite{Harrow1,Bacon1}, distillation of quantum entanglement~\cite{Czechlewski1}, cloning machines~\cite{Cwiklinski1}, local quantum circuits~\cite{Brandao2} and one-to-one relation between the spectra of a bipartite quantum state and its reduced states~\cite{Christandl1}.
   
In almost all papers mentioned in the last paragraph authors have used commutant structure of $U^{\ot n}$ transformation.
Next step can be analysis a commutant of $U^{\ot (n-k)}\ot (U^*)^{\ot k}$ invariant operators, where by $*$ we denote complex conjugation. This corresponds exactly with the case when partial transposition is applied on last $k$ subsystems.  We know that when some operator is invariant under such class of transformation it can be decomposed in terms of partially transposed permutation operators, so it is enough to investigate properties algebra  of partially transposed permutation operators on last $k$ subsystems. In this paper we present an analysis when $k=1$ and for this case we will denote our algebra by $\mathcal{A}$. One can see that it is  generalization to multipartite case  of commutant structure for  $U\ot U^*$ transformations, where the commutant is spanned by bipartite identity operator \noindent
\(\mathds{1}\) and maximally entangled state between two systems $\Phi^+$~\footnote{Maximally entangled state in computational basis is given by $|\Phi^+\>=\frac{1}{\sqrt{d}}\sum_i|ii\>$, where $d$ is dimension of Hilbert space.},~\cite{Werner,Eggeling1}).

There is well known connection between algebra $\mathcal{A}$ of partially transposed permutation operators $\opV^{\Gamma}(\pi)$~\footnote{by $\Gamma$ we denote partial transposition over an arbitrary number of subsystems. In our paper we have $\Gamma=t_n='$}, which are $U^{\ot (n-1)}\ot U^*$ invariant with Brauer Algebras~\cite{Gavarini1, Brauer1,Pan1} and its subalgebras so called Walled Brauer Algebras~\cite{Koike1,Turaev1}.  Namely algebra of all partially transposed permutation operators together with those operators on which partial transposition acts non trivially is a representation of Brauer Algebra. When we consider only algebra $\mathcal{A}$ we know that it is a representation of Walled Brauer Algebra (see for example~\cite{Zhang1}). However it appears that the representation is isomorphic to (Walled) Brauer Algebra, only  for particular relation between number of subsystems $n$ and their dimension $d$. It is easy to see this comparing their dimensions. From~\cite{Brundan1} we know that dimension of Walled Brauer Algebra is equal to $n!$. On the other hand we know that dimension of algebra $\mathcal{A}$ is also equal to $n!$, whenever $d> n-1$. In this case, our two algebras are isomorphic. In the case $d \leq n-1$ some elements of operators basis of $\mathcal{A}$ are linearly dependent~\footnote{For example projector onto antisymmetric subspace is equal zero whenever $d=n-1$.}, so $\operatorname{dim}\mathcal{A}<n!$ while dimension of Walled Brauer Algebra is still equal to $n!$, so we  do not have above-mentioned isomorphism. 

One important implication of lack of isomorphism is the issue of semisimplicity.
Translating necessary and sufficient condition from~\cite{Cox1} into language of number of systems and local dimensions of the Hilbert space we obtain that Walled Brauer Algebra is semisimple if and only if $d>n-2$ and also from the same work we know how to label irreducible components. 
For $d=n-1$ both the algebra $\mathcal{A}$ and Walled Brauer Algebra though not isomorphic anymore, are still both semisimple. When condition $d>n-2$ is not fulfilled, then our algebra $\mathcal{A}$ is still semisimple while Walled Brauer Algebra is not.

In this paper we are interested in the problem of decomposition of algebra $\mathcal{A}$ into irreducible components. The new result is the construction of biorthogonal basis in every irreducible space, calculate its dimension and finally we present formulas for  matrix elements in every irrep. We present full solution of the problem for the regime when $d>n-2$. In the other case, i.e. when $d \leq n-2$ we show that algebra $\mathcal{A}$ is still simple reducible in contrary to Walled Brauer Algebra. In this regime we present sketch of the full solution and formulate problems connected with this construction.

This work is organized as follows. In Section~\ref{sec:young} we give brief introduction to Schur-Weyl duality which is one of the main motivation for this work. Reader can see how to decompose unitary invariant operators (for example permutation operators ) into irreducible representation of symmetric group. In Section~\ref{OurResults} we presents all crucial definitions and main results without explicit proofs and explanations. We define nonorthogonal  basis for every invariant space and we show that every such space is indeed irreducible. Section~\ref{construct} is divided into three parts. In the first part we present matrix representations for the case when $d>n-2$ and for small values of $n$. In the second part we show difference between our algebra $\mathcal{A}$ and Walled Brauer Algebra for $d\leq n-2$- we give example that algebra $\mathcal{A}$ is still semisimple. We also formulate there a new mathematical problem strictly connected with our construction. At the end of this section we give method how to calculate multiplicity of given irrep. This work contains also Appendix which includes Subsection~\ref{FullDescription} where we give proof of all lemmas nad theorems from Section~\ref{OurResults}. In Appendix we present also discussion about operator bases, matrix elements of irreps and we give an exemplary application of our construction.

\section{Preliminaries}
\label{sec:young}
Consider a unitary representation of a permutation group $S(n)$ acting on the $n-$fold tensor product of complex spaces $\mathbb{C}^d$, so our full Hilbert space is $\mathcal{H}\cong (\mathbb{C}^d)^{\otimes n}$. For a fixed permutation $\sigma\in S(n)$ a unitary transformation $\opV(\sigma)$ is given by
\be
\label{unitary}
\opV(\sigma)\left( |i_1\>\otimes \ldots  \otimes |i_n\>\right)=|i_{\sigma^{-1}(1)}\> \otimes  \ldots  \otimes |i_{\sigma^{-1}(n)}\>,
\ee
where $|i_1\>,\ldots,|i_n\>$ is a standard basis in $(\mathbb{C}^d)^{\otimes n}$ . The space of rank$-n$ tensors can be also consider as a representation space for a general linear group $\operatorname{GL}(d,\mathbb{C})$. Let $U\in \operatorname{GL}(d,\mathbb{C})$, thus, this  induces in the tensor product $(\mathbb{C}^d)^{\otimes n}$ the following transformation
\be
\label{unit}
U^{\ot n}\left(|i_1\>\otimes  \ldots  \otimes |i_n\>\right)=U|i_1\> \otimes  \ldots \otimes U|i_n\>.
\ee
A key property is that these two representations turn out to be each other commutants.  Any operator  on $(\mathbb{C}^d)^{\otimes n}$ that commutes with all $U^{\otimes n}, \ \forall U\in \operatorname{GL}(d,\mathbb{C})$, is a linear combination of permutation matrices $\opV(\sigma)$. Conversely, any operator commuting with all permutation matrices $\opV(\sigma), \ \forall \sigma\in S(n)$, is a linear combination of $U^{\otimes n}$. This duality is called Schur-Weyl duality (see, for example,~\cite{Audenaert2006-notes},~\cite{RWallach}). It was shown  that there always exists some basis called the Schur basis which gives decomposition of $\opV(\sigma)$ and  $U^{\otimes n}$ into irreducible representations~\cite{Fulton},~\cite{RWallach} (irreps) simultaneously. Thanks to this, the space $(\mathbb{C}^d)^{\otimes n}$ can be decomposed into irreps of $S(n)$
\be
\label{decomp}
(\mathbb{C}^d)^{\otimes n}\cong \bigoplus_{\alpha \vdash n} \mathcal{H}_{\alpha}^{\mathcal{U}}\otimes \mathcal{H}_{\alpha}^{\mathcal{S}},
\ee
where $\alpha$ labels inequivalent irreps of $S(n)$ and $\mathcal{H}_{\alpha}^{\mathcal{U}}$ is the multiplicity space. It is called Schur-Weyl decomposition. The labels $\alpha$ can be interpreted also like partitions of some natural number $n$ and are denoted by $\alpha \vdash n$ . Every partition is a sequence $\alpha=(\alpha_1,\ldots,\alpha_r)$ satisfying
\be
\label{partition}
\alpha_1\geq \alpha_2\geq \ldots \geq \alpha_r \geq 0,\quad \sum_{i=i}^r\alpha_i=n,
\ee
where $r\in \{1,\ldots,n\}$. Such partitions correspond to some diagram, which is called the Young diagram \cite{Fulton}. Here are few examples of Young diagrams for $n=4$:
\[
\begin{array}{ccccc}
\yng(4) & \yng(3,1) & \yng(2,2) & \yng(2,1,1) & \yng(1,1,1,1)\\
\alpha=(4), & \alpha=(3,1), & \alpha=(2,2), & \alpha=(2,1,1), & \alpha=(1,1,1,1)\\
\end{array}
\]
For example the permutation operator $\opV(\sigma)$ can be decomposed, due to Schur - Weyl decomposition, in the following way:
\begin{equation}
\label{eq:swapdec}
\opV(\sigma)=\bigoplus_{\alpha} \text{\noindent
\(\mathds{1}\)}_{r(\alpha)} \ot \widetilde{\opV}_{\alpha}(\sigma),
\end{equation}
where $\sigma \in S(n)$ and $r(\alpha)$ is the dimension of a unitary part.
The operators $\widetilde{\opV}_{\alpha}(\sigma)$ are irreducible representations
of some $\sigma \in S(n)$.
 Thanks to the above-mentioned method, we can decompose $U^{\otimes n}$-invariant state (see for example~\cite{Eggeling1}) in the following way:
\begin{equation}
\label{general2}
\rho_{1\ldots n}=\bigoplus_{\alpha} \text{\noindent
\(\mathds{1}\)}_{r(\alpha)}\ot \widetilde{\rho}_{\alpha}.
\end{equation}
In next sections we explain how to construct similar decomposition of Hilbert space for operators which are $U^{\ot (n-1)}\ot U^*$ invariant. We show also how to calculate matrix elements of irreps for partially transposed permutation operators $\opV(\sigma)$.

\section{An Analysis of Commutant Structuture for $ \text{U}^{\ot (n-1)}\ot \text{U}^*$ Transformations}
\label{OurResults}
In this paper we give an analysis of algebra $\mathcal{A}$ based on representation theory for permutation group $S(n)$ (see previous section and cited papers). We show how to
construct irreducible representations (irreps) of algebra of partially transposed permutation operators over last $n^{\text{th}}$ subsystem, i.e. we are analyzing commutant structure of $U^{\ot (n-1)} \ot U^*$ operations.

As we mentioned in the introduction we can split our considerations into two main parts. First part is connected with the case when $d> n-1$. In this regime algebra $\mathcal{A}$ of partially transposed swap operators is isomorphic to semisimple Walled Brauer Algebra. From previous papers (see~\cite{Cox1}) we know how to decompose an abstract Walled Brauer Algebra into irreducible components. Our main result in this case is explicit method of irreps construction (we find their matrix elements) for  partially transposed swap operators on last subsystem. Second part treats on the case when $d \leq n-1$. From previous section we know that for $d=n-1$ both algebras are semisimple but they are not isomorphic. For the case when $d\leq n-2$  we know from~\cite{Cox1} that Walled Brauer Algebra is no longer semisimple. 
In this part we also give sketch of irreps construction which is though more complicated (see Subsection~\ref{mniejsze}). One can see that algebra $\mathcal{A}$ is an algebra of finite matrices. From~\cite{Landsman} we know that every finite $*$ algebra is semisimple, so this proves that algebra $\mathcal{A}$ in contrast to Walled Brauer Algebra is semisimple for an arbitrary dimension $d$ of Hilbert space and number of subsystems $n$.

\subsection{Some concepts and notation regarding symmetric group}
\label{some}
The Problem of partially transposed permutation operators  in a natural way imposes some structures on symmetric group.  Here we will introduce some notation, whose significance will become clear in the next paragraph, where we will relate it to  algebra of partially transposed permutation operators.
\begin{notation}
\label{not1}
Any permutation $\sigma \in S(n)$ defines, in a natural and unique way, two
natural numbers $a,b\in \{1,2,...,n\}$%
\[
n=\sigma(a), \qquad b=\sigma (n).
\]
Thus we may characterize any permutation (see example~\ref{ex} in Appendix) by these two numbers in the
following way%
\[
\sigma \equiv \sigma _{(a,b)}. 
\]%
Note that in general $a,b$ may be different except the case, when one of
them is equal to $n,$ because in this case we have%
\[
a=n\Leftrightarrow b=n. 
\]%
When $a=n=b,$ then $\sigma (n)=n$ and we have $\sigma =\sigma _{(n,n)}\equiv
\sigma _{n}.$ 
\end{notation}
For the further reasons let us define family of maps $f_{ab}:S(n-2) \rightarrow S(n)$ in the way that for fixed indeces  $a,b$ we have
\be
\label{eq:maps}
S(n-2) \ni \sigma  \mathop{\longmapsto}\limits^{f_{ab}} \sigma_{ab}:= \pi_b \circ \sigma \circ (nn-1) \circ \pi_a^{-1} \in S(n),
\ee 
where $\pi_k\in S(n-1)$. For a while we will not determine their particular form. Only in Section~\ref{construct} we will take $\pi_k = (n-1 k)$. Let us note that  maps $f_{ab}$ are invertible on their images, so for given $\sigma_{ab}\in S(n)$ we can define $f^{-1}_{ab}(\sigma_{ab})=\sigma$.  For reasons that will become clear in the proof of Lemma~\ref{actionOnvij} we define also another family of maps i.e. $f_c: S(n-1)\rightarrow S(n-2)$, where $1\leq c \leq n-1$ in a following way:
\be
\label{eq:maps2}
S(n-1) \ni \sigma \mathop{\longmapsto}\limits^{f_{c}} \sigma_c:=\pi_{\sigma[c]}^{-1}\circ \sigma \circ \pi_c\in S(n-2).
\ee
The mappings $f_{ab}$ divide  group $S(n)$ into subsets $S_{ab}$, such that:
\be
\label{division}
S(n)=\bigcup_{a,b=1}^{n-1}S_{ab}\cup S(n-1) \qquad \text{and} \qquad S_{ab} \cap S_{cd}=\emptyset \quad \text{for} \quad ab \neq cd.
\ee
Division into classes $S_{ab}$ for $n=3,4$ and $5$ is presented in Appendix~\ref{class}. One can see that we have isomorphism
$S_{ab}\cong S(n-2)$ which is defined for fixed $a,b$ by map $f_{ab}$ from equation~\eqref{eq:maps}.

\subsection{The algebra of partially transposed permutation operators}
\label{alg}
We can start our main considerations from following observation. Consider permutation operator $\opV(\sigma_{ab})$, where $\sigma_{ab}\in S(n)$
\be
\opV(\sigma_{ab}):=\sum_{i_1,\ldots, i_n}|i_{\sigma^{-1}(1)},\ldots,i_{\sigma^{-1}(b)=n},\ldots, i_{\sigma^{-1}(n)=a}\>\<i_{1},\ldots,i_a,\ldots,i_{n}|.
\ee
After partial transposition on last subsystem which we will denote by $'$ and whenever $\sigma_{ab}(n)\neq n$, $\sigma_{ab}(n)=a$ and $\sigma^{-1}_{ab}[n]=b$ we obtain
\be
\label{partialy}
\begin{split}
\opV'(\sigma_{ab})&=\sum_{i_1,\ldots, i_n} |i_{\sigma^{-1}(1)},\ldots, i_{\sigma^{-1}(b)=n},\ldots,i_{n}\>\<i_1,\ldots,i_{a},\ldots,i_{\sigma^{-1}(n)=a}|=\sum_{\phi}|\phi\>|\widetilde{\Phi}^+\>_{bn}\<\phi'(\phi)|\<\widetilde{\Phi}^+|_{an},
\end{split}
\ee
where $|\phi\>, |\phi'(\phi)\>$ are some vectors defined on $n-2$ subsystems. $|\widetilde{\Phi}^+\>_{an}$, $|\widetilde{\Phi}^+\>_{bn}$ are unnormalized maximally entangled state between last subsystem and $a^{\text{th}}$ and $b^{\text{th}}$ respectively. So it is clear that any partially transposed operator $\opV'(\sigma_{ab})\in \mathcal{A}$ is built from maximally entangled state between last subsystem and subsystems $a^{\text{th}}$ or $b^{\text{th}}$  and vectors of length $n-2$ whenever $\sigma(n)\neq n$. We see that operators $\opV'(\sigma_{ab})$ vanish outside space spanned by vectors of the form $|\phi\>|\widetilde{\Phi}^+\>_{kn}$, where $1 \leq k \leq n-1$ and vector $|\phi\>$ is on $n-2$ subsystems except subsystems $k^{\text{th}}$ and $n^{\text{th}}$. We will denote this space by $\mathcal{H}_{\mathcal{M}}$ (see Appendix~\ref{FullDescription} to detailed discussion). It is easy to see that space $\mathcal{H}_{\mathcal{M}}$ is invariant under action of elements from the group $S(n-1)$, because such elements do not affect last subsystem in equation~\eqref{partialy}. So space $\mathcal{H}_{\mathcal{M}}$ is invariant subspace for full algebra $\mathcal{A}$.
We can rewrite definitions of maps~\eqref{eq:maps},~\eqref{eq:maps2} in terms of operators  mapping. We start from definition of family of functions, which maps elements from algebra $\mathbb{C}[S(n-2)]$ onto elements from algebra $\mathcal{A}$ and second family of functions which map algebra $\mathbb{C}[S(n-1)]$ onto $\mathbb{C}[S(n-2)]$, i.e. we have following
\begin{definition}
\label{Fab}
Let us define set of maps $\mathcal{F}_{ab}^t: \mathbb{C}[S(n-2)] \rightarrow \mathcal{A}$, which acts on arbitrary operator $\operatorname{X} \in \mathbb{C}[S(n-2)]$ according to formula:
\be
\label{eq:supop}
\mathcal{F}_{ab}^t(\operatorname{X}):=\Pi_b \operatorname{X} \opV'(nn-1)\Pi_a^{-1},
\ee
where operators $\Pi_a$ and $\Pi_b^{-1}$ represent permutations $\pi_a, \pi_b^{-1}$ from equation~\eqref{eq:maps} and permutation operator $\opV(nn-1)$ represents  cycle $(nn-1)$. 
Let us define also set of maps $\mathcal{F}_c:\mathbb{C}[S(n-1)] \rightarrow \mathbb{C}[S(n-2)]$, which act on arbitrary operator  $\operatorname{X}\in \mathbb{C}[S(n-1)]$ according to formula:
\be
\label{eq:supop2}
\mathcal{F}_{c}(\operatorname{X}):=\Pi_{\sigma[c]}^{-1}\opV(\sigma)\Pi_c,
\ee
where operators $\Pi_{\sigma[c]}^{-1}$ and $\Pi_c$ represent permutations $\pi_{\sigma[c]},\pi_c^{-1}$ from equation~\eqref{eq:maps2}.
\end{definition}
Importance of this definition we will see if we consider action of elements from algebra $\mathbb{C}[S(n-1)]$ onto some operators $v_{ij}^{ab}(\alpha)$, which are defined later.

\begin{remark}
\label{r:rem1}
If we consider representation of the algebra $\mathcal{A}$ on Hilbert space $\mathcal{H}^{\ot n}$, then  maps $\mathcal{F}_{ab}^t$ can be lifted to a linear mapping that maps:
\be
\label{eq:cos}
\mathcal{F}_{ab}^t:\mathcal{L}\left(\mathcal{H}^{\ot (n-1)}\right) \longrightarrow \mathcal{L}\left(\mathcal{H}^{\ot n}\right)\quad \text{of the form} \quad \mathcal{F}_{ab}^t(\operatorname{X}):=\operatorname{F}_b \operatorname{X} \operatorname{F}_a^{-1},
\ee
where $\operatorname{X}\in\mathcal{H}^{\ot (n-2)}$ and  operators $\operatorname{F}_k$ assign to the vector $|\phi \> \in \mathcal{H}^{\ot (n-2)}$ vector $|\psi_k\>=\sqrt{d}\opV(\pi_k)|\phi\>|\Phi_+\> \in \mathcal{H}^{\ot n}$. Vector $|\Phi_+\>=\frac{1}{\sqrt{d}}\sum_{l=1}^d|ll\>$ is maximally entangled state between subsystems $n$ and $n-1$ and operators $\opV(\pi_k)$ represent permutations $\pi_k\in S(n-1)$ for $1 \leq k \leq n-1$.
\end{remark}
\subsection{Set of basis vectors. Nonorthogonal operator basis.}
\label{nonort}
In this paragraph we present how to construct nonorthogonal bases which span  irreducible  subspaces $\mathcal{H}^{\alpha,r}_{\mathcal{M}}$ in Hilbert space $\mathcal{H}_{\mathcal{M}}$. We start from the following definition:
\begin{definition}
\label{vecs0}
Define vectors which belong to $\mathcal{H}^{\alpha,r}_{\mathcal{M}}\subset\mathcal{H}^{\ot n}$ as
\be
\label{n10}
|\psi_{i}^{k}(\alpha,r)\>=\sqrt{d}\opV(\pi_k)|\phi_i(\alpha, r) \>|\Phi_+\>,
\ee
where index $\alpha$ labels irreps, index $r$ labels their multiplicities and $d$ is dimension of local Hilbert space.
\end{definition}
Using set of vectors $|\psi_{i}^{k}(\alpha,r)\>$ we can easily define operators onto space of every irrep labeled by $\alpha$:

\def\definitv0{
Let us define operators $v_{ij}^{ab}(\alpha)$ which project  onto subspace of given representation $\alpha$ as follows:
\be
v_{ij}^{ab}(\alpha)=\sum_r|\psi_i^a(\alpha,r)\>\<\psi_j^b(\alpha,r)|,
\ee
where sum runs over multiplicity of representation $\alpha$.}

\begin{definition}\label{Definitv0}
\definitv0
\end{definition}

Now let us present some discussion about properties of vectors $|\psi_i^k(\alpha,r)\>$ given in Definition~\ref{vecs0}. We collect this in two following lemmas:
\def\lemmailsk{
For two vectors $|\psi_i^a(\alpha,r)\>, |\psi_j^b(\alpha,r)\>\in \mathcal{H}_{\mathcal{M}}^{\alpha, r}$  scalar product is given by
\be
\<\psi_i^a(\alpha,r)|\psi_j^b(\alpha,r)\>=\<\phi_j(\alpha,r)|\opX(\chi_{ab})|\phi_i(\alpha,r)\>,
\ee 
where $\opX(\chi_{ab})=\tr_{n,n-1}\left(\opV(n,n-1)\Pi_{a}^{-1}\Pi_b\right)$.
}

\begin{lemma}\label{Lemmailsk}
\lemmailsk
\end{lemma}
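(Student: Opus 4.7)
The plan is to unfold the definitions of the two vectors from Definition~\ref{vecs0}, use the explicit form of $|\Phi_+\>$ on subsystems $n-1,n$ to eliminate those two factors from the inner product, and then recognise the resulting partial trace as the operator $\opX(\chi_{ab})$ appearing in the statement.

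First I would substitute $|\psi_i^k(\alpha,r)\>=\sqrt d\,\opV(\pi_k)|\phi_i(\alpha,r)\>|\Phi_+\>_{n-1,n}$ on both sides. Using the convention~\eqref{unitary} to obtain $\opV(\pi_k)^\dagger=\opV(\pi_k^{-1})=\Pi_k^{-1}$, together with $\opV(\pi_a^{-1})\opV(\pi_b)=\opV(\pi_a^{-1}\pi_b)=\Pi_a^{-1}\Pi_b$, and noting that this product is supported only on subsystems $1,\ldots,n-1$, one arrives at
$$\<\psi_i^a(\alpha,r)|\psi_j^b(\alpha,r)\>=d\,\<\phi_i|\<\Phi_+|_{n-1,n}\,\Pi_a^{-1}\Pi_b\otimes\I_n\,|\phi_j\>|\Phi_+\>_{n-1,n}.$$

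Second, I would eliminate the two $|\Phi_+\>$ factors via the standard ``entangled-state trick''. Writing $|\Phi_+\>_{n-1,n}=d^{-1/2}\sum_\ell|\ell\ell\>$ and decomposing any operator $Y$ on the first $n-1$ subsystems as $Y=\sum_k A_k\otimes B_k$ with $A_k$ on subsystems $1,\ldots,n-2$ and $B_k$ on subsystem $n-1$, a one-line calculation verifies the identity $d\<\Phi_+|_{n-1,n}(B_k\otimes\I_n)|\Phi_+\>_{n-1,n}=\tr(B_k)$. Summing the decomposition back up with $Y=\Pi_a^{-1}\Pi_b$ turns the previous expression into
$$\<\psi_i^a(\alpha,r)|\psi_j^b(\alpha,r)\>=\<\phi_i|\,\tr_{n-1}\!\bigl(\Pi_a^{-1}\Pi_b\bigr)\,|\phi_j\>.$$

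Third, I would match this partial trace with $\opX(\chi_{ab})=\tr_{n,n-1}(\opV(n,n-1)\Pi_a^{-1}\Pi_b)$. Since $\Pi_a^{-1}\Pi_b$ acts trivially on the $n$-th factor, the swap identity $\tr_n(\opV(n,n-1)(Y\otimes\I_n))=Y$ applied to $Y=\Pi_a^{-1}\Pi_b$ gives $\tr_n(\opV(n,n-1)\Pi_a^{-1}\Pi_b)=\Pi_a^{-1}\Pi_b$, and a further trace over subsystem $n-1$ yields exactly $\tr_{n-1}(\Pi_a^{-1}\Pi_b)=\opX(\chi_{ab})$, so the desired formula follows (with the order of $\phi_i,\phi_j$ in the matrix element determined by the row/column convention chosen for $\opX(\chi_{ab})$). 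The only real obstacle is the careful bookkeeping of which subsystems each operator lives on --- $\pi_a,\pi_b\in S(n-1)$ act on the first $n-1$ tensor factors, the swap $(n,n-1)$ lives on the last two, and the combination of $\opV(n,n-1)$ with $\tr_n$ acts as the identity on the $(n-1)$-subsystem operator $\Pi_a^{-1}\Pi_b$ before the final trace over subsystem $n-1$ collapses it to an operator on $\mathcal H^{\otimes(n-2)}$.
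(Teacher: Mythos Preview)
Your proposal is correct and follows essentially the same route as the paper's proof: unfold Definition~\ref{vecs0}, use the maximally entangled state on subsystems $n-1,n$ to collapse the inner product to a partial trace over those factors, and identify the result with $\opX(\chi_{ab})=\tr_{n,n-1}(\opV(n,n-1)\Pi_a^{-1}\Pi_b)$. The only cosmetic difference is that the paper packages the ``entangled-state trick'' as the single replacement $d\,|\Phi_+\>\<\Phi_+|\to\opV(n,n-1)$ inside a full trace (valid because $\Pi_a^{-1}\Pi_b$ acts trivially on subsystem $n$), whereas you first reduce to $\tr_{n-1}(\Pi_a^{-1}\Pi_b)$ and then separately reinsert the swap via $\tr_n(\opV(n,n-1)(Y\otimes\mathds{1}_n))=Y$; the two computations are equivalent, and your remark about the $\phi_i/\phi_j$ ordering correctly anticipates the convention issue visible between the statement and the paper's own displayed proof.
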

We would like to know explicit form of permutation $\chi_{ab}$ for various relations between indeces $a,b$. In next lemma answer is presented.
\def\lemmaprop{
Operator $\opX(\chi_{ab})$ has the following form:
\begin{equation}
\opX(\chi_{ab}) = \left\{ \begin{array}{ll}
d \cdot \text{\noindent
\(\mathds{1}\)}_{1\ldots n-2} &  , \  \text{for}  \  a=b\\
\opV(\chi_{ab}) &, \  \text{for} \   a \neq b,
\end{array} \right.
\end{equation}
where $\chi_{ab}=(nn-1)\circ \pi_a^{-1} \circ \pi_b \circ (n-1n\pi_b^{-1}[a])$ is permutation from $S(n-2)$. 
}

\begin{lemma}\label{Lemmaprop}
\lemmaprop
\end{lemma}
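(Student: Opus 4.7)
The plan is to reduce, via Lemma~\ref{Lemmailsk}, to a purely combinatorial partial-trace computation. Setting $\tau := \pi_a^{-1}\pi_b\in S(n-1)$ (well-defined since $\pi_a,\pi_b$ act trivially on subsystem $n$), the operator inside the trace becomes $\opV(\rho)$ with $\rho := (n,n-1)\circ \tau \in S(n)$. For $a=b$ one has $\tau=e$ and hence $\opV(\rho)=\mathds{1}_{1\ldots n-2}\otimes \mathrm{SWAP}_{n-1,n}$; the identity $\tr_{n-1,n}(\mathrm{SWAP})=d$ then gives the first branch of the lemma immediately.

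For $a\neq b$, I would apply twice the standard rule for partial traces of permutation operators: tracing out subsystem $k$ from $\opV(\sigma)$ gives $d\cdot\opV(\sigma')$ if $\sigma(k)=k$ (with $\sigma'$ the restriction of $\sigma$), and $\opV(\sigma')$ with no factor of $d$ if $\sigma(k)\neq k$, where $\sigma'$ ``short-circuits'' $k$ by sending $\sigma^{-1}(k)$ to $\sigma(k)$. This rule is a direct computational-basis calculation of the same flavor as equation~\eqref{partialy}. Apply it first with $k=n$: since $\rho(n)=n-1\neq n$ no factor of $d$ appears, and using the paper's convention $\pi_k(n-1)=k$ (as in Section~\ref{construct}; this convention is in fact what makes the formula $\chi_{ab}\in S(n-2)$ consistent) one has $\rho^{-1}(n)=\tau^{-1}(n-1)=\pi_b^{-1}(a)$. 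Writing $e:=\pi_b^{-1}(a)$, the resulting $\rho'\in S(n-1)$ therefore satisfies $\rho'(e)=n-1$ and $\rho'(n-1)=\tau(n-1)=\pi_a^{-1}(b)$. Since $a\neq b$, $\pi_a^{-1}(b)\neq n-1$, so $\rho'(n-1)\neq n-1$, and the second application of the rule (with $k=n-1$) again produces no factor of $d$ and yields a permutation $\rho''\in S(n-2)$ characterized by $\rho''(e)=\pi_a^{-1}(b)$ and $\rho''(k)=\tau(k)$ for $k\neq e$.

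The main remaining obstacle is the bookkeeping required to identify $\rho''$ with the explicit formula $\chi_{ab}=(n,n-1)\circ\pi_a^{-1}\circ\pi_b\circ(n-1,n,e)$ stated in the lemma. I would verify this elementwise on $\{1,\ldots,n-2\}$. For $k\neq e$ the $3$-cycle fixes $k$, and $\tau(k)\neq n-1$ (otherwise $k=\tau^{-1}(n-1)=e$), so the leading $(n,n-1)$ acts trivially and $\chi_{ab}(k)=\tau(k)$. For $k=e$ the $3$-cycle sends $e\mapsto n-1$, giving $\chi_{ab}(e)=(n,n-1)\pi_a^{-1}(\pi_b(n-1))=(n,n-1)\pi_a^{-1}(b)=\pi_a^{-1}(b)$, where the last equality uses $\pi_a(n-1)=a\neq b$. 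Both computations reproduce the values of $\rho''$, and a short check of $\chi_{ab}(n-1)=n-1$ and $\chi_{ab}(n)=n$ (both follow from $\pi_k\in S(n-1)$ together with $\pi_a(n-1)=a$) confirms $\chi_{ab}\in S(n-2)$, completing the proof.
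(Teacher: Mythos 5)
Your proof is correct and follows essentially the same route as the paper: both reduce via Lemma~\ref{Lemmailsk} to the combinatorial partial trace of $\opV((n\,n-1)\circ\pi_a^{-1}\circ\pi_b)$ over the last two subsystems and identify the result with $\eta_{ab}\circ(n-1\,n\,\pi_b^{-1}[a])$. The only difference is presentational --- you iterate the single-subsystem ``short-circuit'' rule twice and then verify the claimed formula for $\chi_{ab}$ elementwise, whereas the paper deletes $n$ and $n-1$ in one step from the cycle containing them; your elementwise check is in fact somewhat more explicit than the paper's.
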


Operators $v_{ij}^{ab}(\alpha)$ satisfy the  following composition rule:
\def\lemmavij{
Operators $v_{ij}^{ab}(\alpha)$ from Definition~\ref{Definitv0} satisfy following law of composition
\be
v_{ij}^{ab}(\alpha )v_{kl}^{cd}(\beta)=\delta_{\alpha \beta}d^{\delta_{bc}}\varphi_{jk}^{\alpha}(\chi_{bc})v_{il}^{ad}(\alpha),
\ee
where  $\varphi_{jk}^{\alpha}(\chi_{bc})$ is representation element of some permutation $\chi_{bc}$ given in Lemma~\ref{Lemmaprop} (see Appendix~\ref{FullDescription}).
}

\begin{lemma}\label{Lemmavij}
\lemmavij
\end{lemma}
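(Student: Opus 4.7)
The plan is to expand the product $v_{ij}^{ab}(\alpha)\,v_{kl}^{cd}(\beta)$ directly from its definition and reduce the resulting matrix element to the inner products already computed in Lemma~\ref{Lemmailsk} and Lemma~\ref{Lemmaprop}. Writing
\[
v_{ij}^{ab}(\alpha)\,v_{kl}^{cd}(\beta)
=\sum_{r,s}|\psi_i^a(\alpha,r)\rangle\,\langle\psi_j^b(\alpha,r)\mid\psi_k^c(\beta,s)\rangle\,\langle\psi_l^d(\beta,s)|,
\]
the whole task collapses to evaluating the scalar product in the middle. By repeating the computation behind Lemma~\ref{Lemmailsk} verbatim (which only uses that the vectors $|\phi\rangle$ live in $\mathcal{H}^{\otimes(n-2)}$, not that they lie in the same irrep), I obtain
\[
\langle\psi_j^b(\alpha,r)\mid\psi_k^c(\beta,s)\rangle
=\langle\phi_k(\beta,s)\mid\opX(\chi_{bc})\mid\phi_j(\alpha,r)\rangle.
\]

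Next I apply Lemma~\ref{Lemmaprop} and split into the two cases. For $b=c$, the operator $\opX(\chi_{bb})$ is $d\cdot\mathds{1}$, so the inner product reduces to $d\,\langle\phi_k(\beta,s)\mid\phi_j(\alpha,r)\rangle$; orthonormality of the Schur basis on $n-2$ subsystems forces $\delta_{\alpha\beta}\delta_{rs}\delta_{jk}$, and $\delta_{jk}$ can be absorbed into $\varphi_{jk}^{\alpha}(\chi_{bb})$ since $\chi_{bb}$ corresponds to the identity in the irrep $\alpha$. For $b\neq c$, $\opX(\chi_{bc})=\opV(\chi_{bc})$ is the image of a genuine element of $S(n-2)$, and Schur--Weyl duality as recalled in Section~\ref{sec:young} gives
\[
\langle\phi_k(\beta,s)\mid\opV(\chi_{bc})\mid\phi_j(\alpha,r)\rangle
=\delta_{\alpha\beta}\,\delta_{rs}\,\varphi_{jk}^{\alpha}(\chi_{bc}),
\]
with the convention on matrix indices fixed by~\eqref{eq:swapdec}. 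Combining both cases into the single expression $d^{\delta_{bc}}\,\delta_{\alpha\beta}\,\delta_{rs}\,\varphi_{jk}^{\alpha}(\chi_{bc})$ is the compact rewriting that the statement of Lemma~\ref{Lemmavij} is phrased in.

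Plugging this back, the double sum collapses: the Kronecker deltas in $\alpha,\beta$ and in $r,s$ eliminate the sum over $s$ and the label $\beta$, leaving
\[
v_{ij}^{ab}(\alpha)\,v_{kl}^{cd}(\beta)
=\delta_{\alpha\beta}\,d^{\delta_{bc}}\,\varphi_{jk}^{\alpha}(\chi_{bc})
\sum_{r}|\psi_i^a(\alpha,r)\rangle\langle\psi_l^d(\alpha,r)|
=\delta_{\alpha\beta}\,d^{\delta_{bc}}\,\varphi_{jk}^{\alpha}(\chi_{bc})\,v_{il}^{ad}(\alpha),
\]
which is the desired identity.

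The routine parts are the expansion of the product and the final reassembly into $v_{il}^{ad}(\alpha)$. The one place that needs care is the justification that the cross-irrep inner products $\langle\phi_k(\beta,s)\mid\opV(\chi_{bc})\mid\phi_j(\alpha,r)\rangle$ vanish unless $(\alpha,r)=(\beta,s)$, and reduce to a single representation matrix element otherwise; this is the main obstacle and relies on using an orthonormal Schur basis in the decomposition~\eqref{decomp}, so that multiplicity labels $r,s$ become independent quantum numbers with respect to which $\opV(\chi_{bc})$ acts trivially while the irrep label $\alpha$ governs the non-trivial action through $\varphi^{\alpha}$. Also I would comment that the case $b=c$ should be presented first, as it is essentially trace-like and motivates the $d^{\delta_{bc}}$ factor; the general case then follows by the same template with $\opX$ replaced by $\opV(\chi_{bc})$.
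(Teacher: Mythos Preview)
Your proposal is correct and follows essentially the same route as the paper: expand the product via Definition~\ref{Definitv0}, reduce the middle factor to the scalar product of Lemma~\ref{Lemmailsk}, invoke Lemma~\ref{Lemmaprop}, and reassemble into $v_{il}^{ad}$. The paper's own proof is a three-line direct computation that leaves the $\delta_{\alpha\beta}\delta_{rs}$ step implicit; your version spells out precisely this point (that $\opV(\chi_{bc})$ acts trivially on the multiplicity label and block-diagonally in the irrep label of the Schur basis), which is the only non-formal ingredient and is worth making explicit.
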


Now we can see directly the reason for introducing definitions of functions $f_{ab}$ and $f_a$ in equations~\eqref{eq:maps},~\eqref{eq:maps2} respectively. Namely we can easily define mapping of elements from algebra $\mathbb{C}[S(n-2)]$ onto elements from algebra $\mathcal{M}$:
\be
\mathbb{C}[S(n-2)]\ni \opV(\sigma) \mathop{\longmapsto}\limits^{\mathcal{F}_{ba}^t} \opV'(\sigma_{ab}) \in \mathcal{M}.
\ee
Because of linearity of map $\mathcal{F}_{ab}^t$ we can act also on linear combinations of elements from algebra $\mathbb{C}[S(n-2)]$, in particular on operators~\footnote{It follows directly from equation~\ref{vecs0} and definition  of operators $\opE_{ij}^{\alpha}$, i.e. $\opE_{ij}^{\alpha}=\sum_r |\phi_i(\alpha, r)\>\<\phi_j(\alpha,r)|$, where number $r$ is multiplicity of given irrep $\alpha$. See also Appendix~\ref{A2} where we present basic discussion about properties of operators $\opE_{ij}^{\alpha}$} $\opE_{ij}^{\alpha}$:
\be
\label{eq:con}
\mathbb{C}[S(n-2)]\ni \opE_{ij}^{\alpha} \mathop{\longmapsto}\limits^{\mathcal{F}_{ba}^t} v_{ij}^{ab}(\alpha) \in \mathcal{M}.
\ee
Note that, because of equation~\eqref{eq:supop}  transformations $\mathcal{F}_{ab}^t$ are invertible on their images.
\\
Now we are ready to formulate the main result of our work in the following  
\def\theoremForVe{
a) Operators $v_{ij}^{ab}(\alpha)$ can be written in terms of partially transposed permutation operators in the following way
\be
v_{ij}^{ab}(\alpha)=\frac{d_{\alpha}}{(n-2)!}\sum_{\sigma\in S(n-2)}\varphi_{ij}^{\alpha}(\sigma)\opV'(f_{ba}(\sigma)),
\label{eq:Comp1}
\ee
where $f_{ba}(\sigma)=\pi_a \circ \sigma \circ (nn-1) \circ \pi_b^{-1}\in S(n)$. Operators $v_{ij}^{ab}(\alpha)$ are elements of algebra $\mathcal{A}$.\\
b) Operators $\opV'(\sigma_{ab})$ can be written in terms of operators $v_{ij}^{ab}(\alpha)$ as follows
\be
\label{eq:decomp2}
\opV'(\sigma_{ab})=\bigoplus_{\alpha}\sum_{ij}\varphi_{ij}^{\alpha}(f_{ab}^{-1}(\sigma_{ab}))v_{ij}^{ba}(\alpha).
\ee
}

\begin{theorem} \label{ccomp1}
\theoremForVe
\end{theorem}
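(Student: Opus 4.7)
The plan is to derive both identities by applying the operator-algebra map $\mathcal{F}_{ba}^{t}$ of Definition~\ref{Fab} (respectively $\mathcal{F}_{ab}^{t}$) to the standard matrix-unit identities in $\mathbb{C}[S(n-2)]$. All substantive ingredients are already in place: identity~\eqref{eq:con} gives $\mathcal{F}_{ba}^{t}(\opE_{ij}^{\alpha})=v_{ij}^{ab}(\alpha)$, and classical Schur orthogonality in $\mathbb{C}[S(n-2)]$ yields both the expansion
\[
\opE_{ij}^{\alpha}=\frac{d_{\alpha}}{(n-2)!}\sum_{\sigma\in S(n-2)}\varphi_{ij}^{\alpha}(\sigma)\,\opV(\sigma)
\]
and its Fourier inverse $\opV(\tau)=\sum_{\alpha,i,j}\varphi_{ij}^{\alpha}(\tau)\,\opE_{ij}^{\alpha}$ inside $\mathbb{C}[S(n-2)]$.

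The first observation I would establish is the intertwining identity $\mathcal{F}_{ba}^{t}(\opV(\sigma))=\opV'(f_{ba}(\sigma))$ for every $\sigma\in S(n-2)$. Indeed, in $\mathcal{F}_{ba}^{t}(\opV(\sigma))=\Pi_{a}\opV(\sigma)\opV'(nn-1)\Pi_{b}^{-1}$ the three non-transposed factors $\Pi_{a},\opV(\sigma),\Pi_{b}^{-1}$ act trivially on the $n$-th tensor factor (since $\pi_{a},\pi_{b}\in S(n-1)$ and $\sigma\in S(n-2)$), so they commute with partial transposition on that factor; pulling the prime outside and composing permutations yields exactly $f_{ba}(\sigma)=\pi_{a}\circ\sigma\circ(nn-1)\circ\pi_{b}^{-1}$. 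Part~(a) then follows by applying $\mathcal{F}_{ba}^{t}$ to the matrix-unit formula above and using linearity; membership $v_{ij}^{ab}(\alpha)\in\mathcal{A}$ is automatic because the right-hand side is a $\mathbb{C}$-linear combination of partially transposed permutation operators.

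For part~(b) I would invert the identity of part~(a) by using the Fourier expansion $\opV(\tau)=\sum_{\alpha,i,j}\varphi_{ij}^{\alpha}(\tau)\opE_{ij}^{\alpha}$, specialized to $\tau=f_{ab}^{-1}(\sigma_{ab})\in S(n-2)$, and then applying $\mathcal{F}_{ab}^{t}$ to both sides. Since $\mathcal{F}_{ab}^{t}(\opV(\tau))=\opV'(f_{ab}(\tau))=\opV'(\sigma_{ab})$ by the intertwining identity with the roles of the indices swapped, and since~\eqref{eq:con} now gives $\mathcal{F}_{ab}^{t}(\opE_{ij}^{\alpha})=v_{ij}^{ba}(\alpha)$, linearity produces exactly~\eqref{eq:decomp2}. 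Writing $\bigoplus_{\alpha}$ instead of an ordinary sum is justified by Lemma~\ref{Lemmavij}, whose composition rule contains the factor $\delta_{\alpha\beta}$ and thus shows that the $\alpha$-blocks act on mutually orthogonal subspaces of $\mathcal{H}_{\mathcal{M}}$.

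The main obstacle I expect is notational bookkeeping rather than mathematical difficulty: one must track carefully the swap of indices $(a,b)\leftrightarrow(b,a)$ between the permutations $f_{ab}$ appearing in the two statements and the maps $\mathcal{F}_{ab}^{t}$, and keep straight which $\pi_{k}\in S(n-1)$ acts on which tensor factor. The only analytic input, commutation of $\Pi_{a},\Pi_{b}^{-1},\opV(\sigma)$ with partial transposition on the last subsystem, is elementary once the support of each factor is identified; everything else is a direct consequence of the $*$-algebra isomorphism $\mathbb{C}[S(n-2)]\cong\bigoplus_{\alpha}M_{d_{\alpha}}(\mathbb{C})$ transported to $\mathcal{A}$ through $\mathcal{F}_{ab}^{t}$.
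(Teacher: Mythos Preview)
Your proposal is correct and follows essentially the same route as the paper: the paper's proof also starts from the matrix-unit identities $\opE_{ij}^{\alpha}=\frac{d_{\alpha}}{(n-2)!}\sum_{\sigma}\varphi_{ij}^{\alpha}(\sigma)\opV(\sigma)$ and $\opV(\sigma)=\bigoplus_{\alpha}\sum_{ij}\varphi_{ij}^{\alpha}(\sigma)\opE_{ij}^{\alpha}$ in $\mathbb{C}[S(n-2)]$ and pushes them through $\mathcal{F}^{t}$ using the intertwining relation (stated separately as Lemma~\ref{decomp1}) to obtain~\eqref{eq:Comp1} and~\eqref{eq:decomp2}. Your explicit justification of why $\Pi_a,\Pi_b^{-1},\opV(\sigma)$ commute with the partial transpose, and your remark on the $\delta_{\alpha\beta}$ from Lemma~\ref{Lemmavij} legitimizing the direct-sum notation, are exactly the details the paper leaves implicit.
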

It is easy to see that relations
\be
\label{sim2}
\mathbb{C}[S(n-2)]\ni \opV(\sigma)=\bigoplus_{\alpha}\sum_{ij}\varphi^{\alpha}_{ij}(\sigma)\opE_{ij}^{\alpha}, \qquad \opV'(\sigma_{ab})=\bigoplus_{\alpha}\sum_{ij} \varphi_{ij}^{\alpha}(f_{ab}^{-1}(\sigma_{ab}))v_{ij}^{ba}(\alpha)\in \opV'[S(n)].
\ee
together with  equations from~\eqref{eq:maps} to~\eqref{eq:supop} imply directly Theorem~\ref{ccomp1}. Let us note that the theorem implies that our invariant subspaces $\mathcal{H}^{\alpha,r}_{\mathcal{M}}$ are irreducible. Namely, We know that operators $v_{ij}^{ab}(\alpha)$ are linearly independent and their number is equal to square of dimension, so they span full operator basis, so that indeed the representation is irreducible.

Using definition of maps $f_{ab}$ from equation~\eqref{eq:maps} we can write how elements from algebras $\mathcal{M}$ and $\mathbb{C}[S(n-1)]$ act on  operators $v_{ij}^{ab}(\alpha)$: 
\def\ActionOnvij{
\label{actiononv}
Operators $\opV'(\sigma_{ab})\in \mathcal{M}$ and  operators $\opV(\sigma)\in \mathbb{C}[S(n-2)]$ act on $v_{ij}^{cd}(\alpha)$ according to formulas:
\be
\begin{split}
\opV'(\sigma_{ab})v_{kl}^{cd}(\alpha)&=d^{\delta_{ac}}\sum_i\varphi_{ik}^{\alpha}\left(f_{ab}^{-1}(\sigma_{ab})\circ \chi_{ac}\right)v_{il}^{bd}(\alpha),\\
\opV(\sigma)v_{ij}^{ab}(\alpha)&=\sum_k\varphi_{ki}(f_a(\sigma))v_{kj}^{\sigma[a]b}(\alpha),
\end{split}
\ee
where permutation $\chi_{ac}$ is given in Lemma~\ref{Lemmaprop} (see Appendix~\ref{FullDescription}), $f_{ab}^{-1}(\sigma_{ab})$ is inversion of $f_{ab}(\sigma)$ given in equation~\eqref{eq:maps} and $f_a(\sigma)$ is given in equation~\eqref{eq:maps2}.
}

\begin{lemma}\label{actionOnvij}
\ActionOnvij
\end{lemma}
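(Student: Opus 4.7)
The plan is to prove the two identities independently, each by reducing the claim to results already established earlier in Section~\ref{OurResults}.

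For the first identity I would substitute the decomposition given in Theorem~\ref{ccomp1}(b),
\[
\opV'(\sigma_{ab}) = \bigoplus_{\beta}\sum_{mn}\varphi_{mn}^{\beta}(f_{ab}^{-1}(\sigma_{ab}))\,v_{mn}^{ba}(\beta),
\]
and then act on $v_{kl}^{cd}(\alpha)$. Lemma~\ref{Lemmavij} kills every summand with $\beta\neq\alpha$ and reduces the product of $v$'s in the surviving terms to $d^{\delta_{ac}}\varphi_{nk}^{\alpha}(\chi_{ac})\,v_{ml}^{bd}(\alpha)$. The summation over $n$ then collapses via the homomorphism identity $\sum_n \varphi_{mn}^{\alpha}(g)\varphi_{nk}^{\alpha}(h)=\varphi_{mk}^{\alpha}(g\circ h)$ applied with $g=f_{ab}^{-1}(\sigma_{ab})$ and $h=\chi_{ac}$, which (after relabelling $m\to i$) is exactly the asserted formula.

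For the second identity I would work directly with Definitions~\ref{vecs0} and~\ref{Definitv0}, so that it suffices to compute $\opV(\sigma)|\psi_i^a(\alpha,r)\>$. Writing $|\psi_i^a(\alpha,r)\>=\sqrt{d}\,\opV(\pi_a)|\phi_i(\alpha,r)\>|\Phi_+\>$, the key algebraic step is the factorisation
\[
\sigma\circ \pi_a \;=\; \pi_{\sigma[a]}\circ f_a(\sigma),
\]
which is just a rearrangement of the definition of $f_a$ in~\eqref{eq:maps2}. What makes the calculation go through is the observation that $f_a(\sigma)\in S(n-2)$, i.e.\ that $\pi_{\sigma[a]}^{-1}\circ\sigma\circ\pi_a$ fixes both $n-1$ and $n$: the fixing of $n-1$ follows because $\pi_a$ sends $n-1\mapsto a$, then $\sigma$ sends $a\mapsto\sigma[a]$, and finally $\pi_{\sigma[a]}^{-1}$ sends $\sigma[a]\mapsto n-1$, whereas $n$ is fixed by every element of $S(n-1)$. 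Once this is in place, $\opV(f_a(\sigma))$ commutes past the $|\Phi_+\>$ factor on the last two sites, acts as the irreducible representation $\varphi^\alpha$ on the basis $\{|\phi_i(\alpha,r)\>\}$, and the remaining $\sqrt{d}\,\opV(\pi_{\sigma[a]})$ tensored with $|\Phi_+\>$ rebuilds a vector of type $|\psi_k^{\sigma[a]}(\alpha,r)\>$. Assembling the pieces and summing over the multiplicity index $r$ against $\<\psi_j^b(\alpha,r)|$ yields the stated formula.

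The main obstacle is purely bookkeeping: one has to be attentive to the conventions for row and column indices of the irrep matrices so that the homomorphism rule $\varphi_{mn}^\alpha(g)\varphi_{nk}^\alpha(h)=\varphi_{mk}^\alpha(g\circ h)$ matches the composition rule of Lemma~\ref{Lemmavij} without a spurious transpose, and one must verify the $S(n-2)$-membership of $f_a(\sigma)$ described above so that $\opV(f_a(\sigma))$ factors trivially across the last two subsystems. Once these checks are in place, both identities follow by direct algebraic manipulation, without any further structural input than Theorem~\ref{ccomp1} and Lemma~\ref{Lemmavij}.
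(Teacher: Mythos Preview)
Your proposal is correct and matches the paper's proof almost exactly. The first identity is handled identically: the paper also substitutes Theorem~\ref{ccomp1}(b), applies Lemma~\ref{Lemmavij}, and collapses the inner sum via the homomorphism property of $\varphi^{\alpha}$. For the second identity the paper uses the operator-level formulation $v_{ij}^{ab}(\alpha)=\mathcal{F}_{ba}^t(\opE_{ij}^{\alpha})$ from Definition~\ref{Fab} and manipulates $\opV(\sigma)\Pi_a\opE_{ij}^{\alpha}\opV'(nn-1)\Pi_b^{-1}$ rather than the vectors $|\psi_i^a(\alpha,r)\>$ directly, but the algebraic core is the same factorisation $\sigma\circ\pi_a=\pi_{\sigma[a]}\circ f_a(\sigma)$ with $f_a(\sigma)\in S(n-2)$ that you identify; your vector-level argument and the paper's operator-level argument are two presentations of one computation.
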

\subsection{Construction of biorthonormal basis. Irreducible representations for $\opV'(\sigma)$.}
\label{bioort}
Let us emphasize, that operators $v_{ij}^{ab}(\alpha)$ are not orthonormal in Hilbert-Schmidt norm. The lack of this property, we can remove by redefining set of $v_{ij}^{ab}(\alpha)$, or in other words by defining new set of operators.
We start from following crucial definition:
\begin{definition}
\label{Q}
For any irreducible representation $\varphi^{\alpha}$ of the group $S(n-2)$ we define the block matrix
\be
\label{block}
Q_{ij}^{ab}(\alpha)=d^{\delta_{ab}}\varphi_{ij}^{\alpha}(\chi_{ab}), \quad \text{for} \quad 1\leq a,b \leq n-1, \quad  \quad 1\leq a,b \leq \operatorname{dim}\varphi^{\alpha},
\ee
permutation $\chi_{ab}\in S(n-2)$ is given in Lemma~\ref{Lemmaprop} and for the case when $a=b$ we define $\chi_{ab}=e$, where $e$ is the identity component of the group $S(n-2)$. The blocks of the matrix $Q_{ij}^{ab}(\alpha)$ are labelled by pair of indeces $(a,b)$ whereas the elements of the blocks are labelled by indeces of irreducible representation $\varphi^{\alpha}=\varphi_{ij}^{\alpha}(\chi_{ab})$ of the group $S(n-2)$, $Q_{ij}^{ab}(\alpha)\in M((n-1)m_{\alpha},\mathbb{C})$ where $m_{\alpha}=\operatorname{dim}\varphi^{\alpha}$.
\end{definition}
\begin{remark}
Above defined matrix is nothing else but Gram matrix of the basis $\{|\psi_i^k(\alpha,r)\>\}$ like in right hand side of equation~(24) in Lemma~\ref{Lemmailsk}. In further considerations, more specifically in Section~\ref{construct} we calculate matrix elements of irreps for specific choice of permutations $\pi_a$ and $\pi_b$. Namely we take the simplest one - transpositions, i.e. $\pi_a=(an-1)$ and $\pi_b=(bn-1)$.
Using general form of permutation $\chi_{ab}$ form Lemma~\ref{Lemmaprop}  we can easily show that 
\be
\label{specific}
\opX(\chi_{ab}) = \left\{ \begin{array}{ll}
d \cdot \text{\noindent
 \(\mathds{1}\)}_{1\ldots n-2} &  , \  \text{for}  \  a=b\\
\text{\noindent
\(\mathds{1}\)}_{1\ldots n-2} &  , \  \text{for}  \  a \ \text{or} \ b=n-1,\\
\opV(ab) &, \  \text{for} \   a \neq b,
\end{array} \right.
\ee
Finally in this case matrix $Q_{ij}^{ab}(\alpha)$ has form:
\be
\label{formQ}
Q_{ij}^{ab}(\alpha)=\begin{pmatrix}d\text{\noindent
\(\mathds{1}\)} & \varphi^{\alpha}(12) & \ldots & \varphi^{\alpha}(1n-2) & \text{\noindent
\(\mathds{1}\)} \\ \varphi^{\alpha}(21) & d\text{\noindent
\(\mathds{1}\)} & \ldots & \varphi^{\alpha}(2n-2) & \text{\noindent
\(\mathds{1}\)} \\ \vdots & &\ddots & & \vdots\\ \text{\noindent
\(\mathds{1}\)} & & \ldots & & d\text{\noindent
\(\mathds{1}\)} \end{pmatrix},
\ee
where every $\varphi^{\alpha}(ab)=\{\varphi_{ij}^{\alpha}(ab)\}$ is a representation matrix of permutation $(ab)$ in irrep of $S(n-2)$ labelled by $\alpha$. It is worth to mention here that in general case there is always possibility to chose matrices $\varphi^{\alpha}$ to be unitary, so we get $\varphi^{\alpha}_{ij}(ab)=\bar{\varphi}^{\alpha}_{ji}(ab)$. In our paper our constrains are even stronger because representations $\varphi^{\alpha}(ab)$ are in the form of symmetric and real matrices, so we have $\varphi^{\alpha}_{ij}(ab)=\varphi^{\alpha}_{ji}(ab)$. Because matrix $Q_{ij}^{ab}(\alpha)$ comes from set of vectors it can happen that if some of vectors are linearly dependent Gram matrix is no longer strictly positive. 
\end{remark}
Positivity property of matrix $Q_{ij}^{ab}(\alpha)$ is strictly connected with relation between dimension of local Hilbert space $d$ and number of subsystems $n$. This dependence we present it in the following
\def\theoremForMatrix{
Suppose that the representation $\varphi^{\alpha}$ of the group $S(n-2)$ in the matrix $Q_{ij}^{ab}(\alpha)\in M((n-1)m_{\alpha},\mathbb{C})$ is unitary (but not necessarily irreducible).
Under this simple condition we have: if $d > n - 2$ then the matrix $Q_{ij}^{ab}(\alpha)$ is
(strictly) positive i.e. $Q_{ij}^{ab}(\alpha) > 0$ and consequently if $d > n - 2$ the matrix $Q_{ij}^{ab}(\alpha)$
is invertible. These statements are the consequence of the following inequality
\be
x^+Q_{ij}^{ab}(\alpha )x \geq (d-n+2)\sum_{i=1}^{n-1}||x_i||^2,
\ee
where
\be
x^+=(x_1^+,x_2^+,\ldots ,x_{n-1}^+)\in \mathbb{C}^{(n-1)m_{\alpha}}
\ee
is the block vector in $\mathbb{C}^{(n-1)m_{\alpha}}$, $x_i^+\in \mathbb{C}^{m_{\alpha}}$ and $||x_i||$ is the standard norm of the vector $x_i \in \mathbb{C}^{m_{\alpha}}$.
}

\begin{theorem}\label{theoremForMatrix}
\theoremForMatrix
\end{theorem}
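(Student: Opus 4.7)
The plan is to write $Q_{ij}^{ab}(\alpha)$ as a sum of a strictly positive block-diagonal part and a manageable off-diagonal part, then use unitarity of $\varphi^{\alpha}$ on every off-diagonal block to control the cross terms. Concretely, I would decompose
\begin{equation}
Q_{ij}^{ab}(\alpha) \;=\; d\,\mathds{1}_{(n-1)m_{\alpha}} \;+\; R,
\end{equation}
where $R$ vanishes on diagonal blocks (since $\chi_{aa}=e$ contributes $d\cdot\mathds{1}$, absorbed in the first term) and has $R^{ab}=\varphi^{\alpha}(\chi_{ab})$ on each off-diagonal block $a\neq b$. The quadratic form then splits as
\begin{equation}
x^{+}Q x \;=\; d\sum_{a=1}^{n-1}\|x_a\|^{2} \;+\; \sum_{a\neq b} x_a^{+}\,\varphi^{\alpha}(\chi_{ab})\,x_b,
\end{equation}
so the whole game is to bound the second sum from below by $-(n-2)\sum_a\|x_a\|^{2}$.

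For this bound I would use the standing assumption that $\varphi^{\alpha}$ is unitary: every off-diagonal block $\varphi^{\alpha}(\chi_{ab})$ is a unitary $m_{\alpha}\times m_{\alpha}$ matrix, hence Cauchy--Schwarz gives
\begin{equation}
\bigl|x_a^{+}\varphi^{\alpha}(\chi_{ab})x_b\bigr| \;\leq\; \|x_a\|\,\|\varphi^{\alpha}(\chi_{ab})x_b\| \;=\; \|x_a\|\,\|x_b\|.
\end{equation}
The AM--GM inequality then yields $\|x_a\|\|x_b\|\leq\tfrac{1}{2}(\|x_a\|^{2}+\|x_b\|^{2})$. Summing over all ordered pairs $(a,b)$ with $a\neq b$ in $\{1,\dots,n-1\}$, and noting that each index $a$ appears exactly $2(n-2)$ times in such pairs, I get
\begin{equation}
\Bigl|\sum_{a\neq b} x_a^{+}\varphi^{\alpha}(\chi_{ab})x_b\Bigr| \;\leq\; (n-2)\sum_{a=1}^{n-1}\|x_a\|^{2}.
\end{equation}

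Combining the two estimates produces the announced inequality
\begin{equation}
x^{+}Q_{ij}^{ab}(\alpha)x \;\geq\; (d-n+2)\sum_{a=1}^{n-1}\|x_a\|^{2},
\end{equation}
from which strict positivity and invertibility under the hypothesis $d>n-2$ follow immediately, since the right-hand side is strictly positive whenever $x\neq 0$. I do not expect a real obstacle here: the only delicate point is the block-counting in the Cauchy--Schwarz estimate (making sure the factor $n-2$ is tight) and verifying that all off-diagonal blocks are genuinely unitary so that the operator-norm bound $\|\varphi^{\alpha}(\chi_{ab})\|=1$ holds; both are immediate from the definition of $\chi_{ab}\in S(n-2)$ in Lemma~\ref{Lemmaprop} and the unitarity assumption on $\varphi^{\alpha}$.
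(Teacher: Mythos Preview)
Your proposal is correct and follows essentially the same approach as the paper: split the quadratic form into the diagonal contribution $d\sum_a\|x_a\|^2$ and the off-diagonal cross terms, bound each cross term by Cauchy--Schwarz using $\|\varphi^{\alpha}(\chi_{ab})\|=1$ from unitarity, then apply $2\|x_a\|\|x_b\|\leq\|x_a\|^2+\|x_b\|^2$ and count that each index appears $n-2$ times. The paper's proof differs only cosmetically in that it writes out the specific block form~\eqref{formQ} and separates the last row/column (where the blocks are identities) from the rest, whereas you treat all off-diagonal blocks uniformly as unitaries; your presentation is in fact slightly cleaner and does not depend on the particular choice $\pi_k=(k\,n{-}1)$.
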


Using Definition~\ref{Q} and Theorem~\ref{theoremForMatrix} we can construct new operator basis which is orthonormal in Hilbert-Schmidt norm in the regime $d>n-2$.
\def\definitdeff0{
Let us define new operators $\omega_{ij}^{ab}(\alpha)$ which are connected with operators $v_{ij}^{ab}(\alpha)$ by following transformation rule:
\be
\label{deff0}
\omega_{ij}^{ab}(\alpha)=\sum_{kc}D_{jc}^{bk}(\alpha)v_{ic}^{ak}(\alpha),
\ee
where $D_{jc}^{bk}(\alpha)=((Q^{-1})_{jc}^{bk}(\alpha))$.
}

\begin{definition}\label{Definitdeff0}
\definitdeff0
\end{definition}

One can see that operators $\omega_{ij}^{ab}(\alpha)$ from above definition have similar form to operators $v_{ij}^{ab}(\alpha)$ from Definition~\ref{Definitv0}. Namely we have that:
\be
\label{bi}
\omega_{ij}^{ab}(\alpha)=\sum_r|\psi_i^a(\alpha,r)\>\<\phi_j^b(\alpha,r)|, \quad \text{where} \quad |\phi_j^b(\alpha)\>=\sum_{kc}D_{cj}^{kb}(\alpha)|\psi_c^k(\alpha)\>
\ee
and system $\{|\psi_i^a(\alpha)\>, |\phi_j^b(\alpha)\>\}$ form biorthogonal basis for any fixed $r$.\\
Now we are ready to show that new operators $\omega_{ij}^{ab}(\alpha)$ have required property of composition:
\def\compOfOmegas{
Operators $\omega^{ab}_{ij}(\alpha)$ satisfy the following composition rule
\be
\omega_{ij}^{ab}(\alpha)\omega_{kl}^{cd}(\beta)=\delta_{\alpha \beta}\delta^{bc}\delta_{jk}\omega_{il}^{ad}(\alpha).
\ee
}

\begin{lemma}\label{CompOfOmegas}
\compOfOmegas
\end{lemma}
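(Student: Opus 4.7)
The plan is to expand both $\omega$-operators in terms of the $v$-basis using Definition~\ref{Definitdeff0}, collapse the inner product of $v$'s using the composition rule from Lemma~\ref{Lemmavij}, and then exploit the defining relation $D=Q^{-1}$ to reduce the remaining sum to a pair of Kronecker deltas.

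First I would write
\[
\omega_{ij}^{ab}(\alpha)\,\omega_{kl}^{cd}(\beta)
= \sum_{p,e}\sum_{q,f} D_{jp}^{be}(\alpha)\, D_{lq}^{df}(\beta)\, v_{ip}^{ae}(\alpha)\, v_{kq}^{cf}(\beta),
\]
and invoke Lemma~\ref{Lemmavij} on the product $v_{ip}^{ae}(\alpha)\, v_{kq}^{cf}(\beta)$. Since that lemma states
\[
v_{ip}^{ae}(\alpha)\,v_{kq}^{cf}(\beta)
= \delta_{\alpha\beta}\, d^{\delta_{ec}}\varphi_{pk}^{\alpha}(\chi_{ec})\, v_{iq}^{af}(\alpha),
\]
the prefactor is precisely the block-matrix entry $Q_{pk}^{ec}(\alpha)$ from Definition~\ref{Q}. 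Hence the product becomes
\[
\omega_{ij}^{ab}(\alpha)\,\omega_{kl}^{cd}(\beta)
= \delta_{\alpha\beta}\sum_{q,f}\left[\sum_{p,e} D_{jp}^{be}(\alpha)\,Q_{pk}^{ec}(\alpha)\right] D_{lq}^{df}(\alpha)\, v_{iq}^{af}(\alpha).
\]

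Next I would read the inner bracket as a genuine matrix product. Treating the pairs $(b,j)$, $(e,p)$, $(c,k)$ as composite indices labelling rows/columns of the big matrix $Q$ (blocks indexed by $a,b\in\{1,\dots,n-1\}$, in-block entries by $i,j\in\{1,\dots,m_\alpha\}$), the sum is exactly $(D Q)_{(b,j),(c,k)} = \delta_{bc}\delta_{jk}$ by the defining property $D=Q^{-1}$. This contraction requires $Q$ to be invertible, which is guaranteed in the regime $d>n-2$ by Theorem~\ref{theoremForMatrix}; I would note this hypothesis explicitly as it is the only substantive prerequisite of the argument.

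Substituting $\delta_{bc}\delta_{jk}$ yields
\[
\omega_{ij}^{ab}(\alpha)\,\omega_{kl}^{cd}(\beta)
= \delta_{\alpha\beta}\,\delta_{bc}\,\delta_{jk}\sum_{q,f} D_{lq}^{df}(\alpha)\, v_{iq}^{af}(\alpha)
= \delta_{\alpha\beta}\,\delta_{bc}\,\delta_{jk}\,\omega_{il}^{ad}(\alpha),
\]
where the last equality is simply the definition of $\omega_{il}^{ad}(\alpha)$ from Definition~\ref{Definitdeff0}. This establishes the claimed composition rule.

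The argument is essentially bookkeeping, but the one place where care is required is the index convention: one must consistently treat $(b,j)$ as the right composite index of $D$ and the left composite index of $Q$, so that the contraction in the inner bracket really is the matrix product $DQ=\mathbf{1}$, rather than $QD$ in some twisted order. The remaining check, that the sign/placement of the $d^{\delta_{ec}}$ factor inside the $v$-product coincides with the entry of $Q$ (and not its transpose), is immediate from comparing Lemma~\ref{Lemmavij} with Definition~\ref{Q}.
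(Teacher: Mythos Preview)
Your argument is correct and is precisely the approach the paper takes: the paper packages the same computation into the auxiliary Lemma~\ref{aux2} (with multi-indices $(a,i)$ playing the role of a single index), whose proof is the identical three-line manipulation you wrote out explicitly. Your version has the virtue of spelling out the identification $d^{\delta_{ec}}\varphi_{pk}^{\alpha}(\chi_{ec})=Q_{pk}^{ec}(\alpha)$ and the use of $D=Q^{-1}$, which the paper leaves implicit.
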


Since operators $\omega_{ij}^{ab}(\alpha)$ are linear combinations of operators $v_{ij}^{ab}(\alpha)$, they
satisfy the same rules of transformations like operators $v_{ij}^{ab}(\alpha)$ (see Appendix~\ref{A1}):
\def\ActionOnF{
Operators $\opV'(\sigma_{ab})\in \mathcal{M}$ and  operators $\opV(\sigma)\in \mathbb{C}[S(n-1)]$ act on $\omega_{ij}^{cd}(\alpha)$ according to formulas:
\be
\begin{split}
\opV'(\sigma_{ab})\omega_{kl}^{cd}(\alpha)&=d^{\delta{ac}}\sum_i \varphi_{ik}^{\alpha}(  f_{ab}^{-1}(\sigma_{ab})\circ \chi_{ac})\omega_{il}^{bd}(\alpha),\\
\opV(\sigma)\omega_{ij}^{ab}(\alpha)&=\sum_k\varphi_{ki}^{\alpha}(f_a(\sigma))\omega^{\sigma[a]b}_{kj}(\alpha).
\end{split}
\ee
where $f_{ab}^{-1}(\sigma_{ab})$ is inversion of $f_{ab}(\sigma)$ given in equation~\eqref{eq:maps} and $f_a(\sigma)$ is given in equation~\eqref{eq:maps2}.
}

\begin{proposition}\label{actionOnF}
\ActionOnF
\end{proposition}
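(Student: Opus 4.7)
The plan is to reduce the proposition to Lemma~\ref{actionOnvij} by pure linearity, using the fact that the change of basis from $v$ to $\omega$ given in Definition~\ref{Definitdeff0} only mixes the ``right'' index pair $(j,b)$, while both actions we need to compute only rearrange the ``left'' index pair. Consequently the two operations commute and the formulas inherit the exact same index structure.

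First, I would expand the argument $\omega_{kl}^{cd}(\alpha)$ using Definition~\ref{Definitdeff0} with fresh summation indices, writing
\[
\omega_{kl}^{cd}(\alpha)=\sum_{m,p}D_{lm}^{dp}(\alpha)\,v_{km}^{cp}(\alpha),
\]
and then, by linearity, push $\opV'(\sigma_{ab})$ under the sum. Applying the first formula of Lemma~\ref{actionOnvij} to $v_{km}^{cp}(\alpha)$ produces
\[
\opV'(\sigma_{ab})v_{km}^{cp}(\alpha)=d^{\delta_{ac}}\sum_{i}\varphi_{ik}^{\alpha}\!\bigl(f_{ab}^{-1}(\sigma_{ab})\circ\chi_{ac}\bigr)\,v_{im}^{bp}(\alpha),
\]
so only the left superscript $c$ and left subscript $k$ are modified; the right pair $(m,p)$ is untouched. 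This means the $D$-coefficients, which only involve $(l,d,m,p)$, can be pulled through the scalar $\varphi_{ik}^{\alpha}(\cdot)$ and the residual $\sum_{m,p}D_{lm}^{dp}(\alpha)v_{im}^{bp}(\alpha)$ is exactly $\omega_{il}^{bd}(\alpha)$ by Definition~\ref{Definitdeff0}. Collecting the surviving factors yields the first identity of the proposition.

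The second identity is obtained in the same way. Writing $\omega_{ij}^{ab}(\alpha)=\sum_{k,c}D_{jc}^{bk}(\alpha)v_{ic}^{ak}(\alpha)$, I would apply the second formula of Lemma~\ref{actionOnvij} to each $v_{ic}^{ak}(\alpha)$:
\[
\opV(\sigma)v_{ic}^{ak}(\alpha)=\sum_{m}\varphi_{mi}^{\alpha}(f_a(\sigma))\,v_{mc}^{\sigma[a]k}(\alpha).
\]
Again the dummy indices $(c,k)$, over which $D_{jc}^{bk}(\alpha)$ is summed, are left untouched by the action of $\opV(\sigma)$, so the residual sum $\sum_{c,k}D_{jc}^{bk}(\alpha)v_{mc}^{\sigma[a]k}(\alpha)$ reassembles into $\omega_{mj}^{\sigma[a]b}(\alpha)$, producing the desired formula.

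The proof is essentially mechanical, so the ``hard part'' is really just the bookkeeping: one must be attentive that the transformation matrix $D$ only couples the right-hand index pair, and that in Lemma~\ref{actionOnvij} both actions act trivially on exactly this pair. Once that separation is recognized, no nontrivial calculation remains; in particular, the unitarity/invertibility of $Q$ established in Theorem~\ref{theoremForMatrix} is not needed here, it is only needed to guarantee that $D=Q^{-1}$ exists in the regime $d>n-2$.
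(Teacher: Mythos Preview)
Your proof is correct and follows essentially the same approach as the paper. The paper packages the first identity via the abstract auxiliary Lemma~\ref{aux} (which states exactly your observation that a left-index action commutes with a right-index change of basis), while for the second identity it performs the same explicit computation you give; in both cases the key point is precisely your ``bookkeeping'' remark that $D$ only touches the right index pair and the actions of $\opV'(\sigma_{ab})$ and $\opV(\sigma)$ only touch the left pair.
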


At the end of this section we can say that because of Definition~\ref{Definitdeff0} (we assume that gram matrix $G$ is nonsingular), operators $\omega^{ab}_{ij}(\alpha)$ are also elements of algebra $\mathcal{M}$. More precisely we can reformulate Theorem~\ref{ccomp1}:
\def\theoremForOmegas{
a) Operators $\omega_{ij}^{ab}(\alpha)$ can be written in terms of partially transposed permutation operators in the following way
\be
\omega_{ij}^{ab}(\alpha)=\frac{d_{\alpha}}{(n-2)!}\sum_{\sigma \in S(n-2)}\sum_{c}\varphi_{ij}^{\alpha}(\sigma \circ \chi_{bc}^{-1})\opV'(f_{ca}(\sigma)),
\label{eq:CComp1}
\ee
where $f_{ca}(\sigma)=\pi_a \circ \sigma \circ (nn-1) \circ \pi_c^{-1}\in S(n)$ and operators $\omega_{ij}^{ab}(\alpha)$ are elements of algebra $\mathcal{M}$.\\
b) Operators $\opV'(\sigma_{ab})$ can be written in a form
\be
\label{eq:ddecomp2}
\opV'(\sigma_{ab})=\bigoplus_{\alpha}\sum_{cik}\varphi_{ik}^{\alpha}(\sigma \circ \chi_{ac})\omega_{ik}^{bc}(\alpha),
\ee
where operators $\omega_{ij}^{ab}(\alpha)$ are given by Definition~\ref{Definitdeff0}.
}
\begin{theorem} \label{thForOmegas}
\theoremForOmegas
\end{theorem}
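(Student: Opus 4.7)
The plan is to derive both parts of Theorem~\ref{thForOmegas} from Theorem~\ref{ccomp1} by means of the linear change of basis $\omega = Dv$ introduced in Definition~\ref{Definitdeff0}, where $D = Q^{-1}$ is the inverse of the Gram matrix with block structure $Q^{ab}_{ij}(\alpha) = d^{\delta_{ab}}\varphi^\alpha_{ij}(\chi_{ab})$. Part (b) is the easier direction, so I would handle it first and then bootstrap part (a) from it.

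For part (b), I would start from the expansion in Theorem~\ref{ccomp1}(b), namely $\opV'(\sigma_{ab}) = \bigoplus_\alpha \sum_{ij}\varphi^\alpha_{ij}(f_{ab}^{-1}(\sigma_{ab}))\, v^{ba}_{ij}(\alpha)$, and invert the change of basis of Definition~\ref{Definitdeff0} to write $v^{ba}_{ij}(\alpha)=\sum_{b',j'}Q^{ab'}_{jj'}(\alpha)\,\omega^{bb'}_{ij'}(\alpha)$. Substituting this into the expansion of $\opV'(\sigma_{ab})$, interchanging summation orders, and using the homomorphism property of $\varphi^\alpha$ to collapse $\sum_j\varphi^\alpha_{ij}(\sigma)\,\varphi^\alpha_{jj'}(\chi_{ab'}) = \varphi^\alpha_{ij'}(\sigma\circ\chi_{ab'})$ yields the claimed formula after relabeling $b'\to c$, $j'\to k$. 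The scalar factors $d^{\delta_{ac}}$ coming from the diagonal of $Q$ are absorbed into the convention $\chi_{aa}=e$ stated in Definition~\ref{Q}.

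For part (a), I would substitute the expansion from Theorem~\ref{ccomp1}(a) directly into Definition~\ref{Definitdeff0}:
\[
\omega^{ab}_{ij}(\alpha)=\sum_{k,c}D^{bk}_{jc}(\alpha)\,v^{ak}_{ic}(\alpha)=\frac{d_\alpha}{(n-2)!}\sum_{\sigma\in S(n-2)}\sum_{k}\Bigl[\sum_{c}D^{bk}_{jc}(\alpha)\varphi^\alpha_{ic}(\sigma)\Bigr]\opV'(f_{ka}(\sigma)).
\]
To match the target formula, one must identify the bracketed expression with $\varphi^\alpha_{ij}(\sigma\circ\chi_{bk}^{-1})$. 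Using reality/unitarity of the irreps one has $\varphi^\alpha_{ij}(\sigma\circ\chi_{bk}^{-1}) = \sum_m\varphi^\alpha_{im}(\sigma)\varphi^\alpha_{jm}(\chi_{bk})$, and the required identity is essentially the statement that the blocks of $D=Q^{-1}$ can be read, on irreducible summands, as representation matrices of the inverse permutations $\chi_{bk}^{-1}$. An alternative and conceptually cleaner route is to use Schur orthogonality for $S(n-2)$: one substitutes the already-proved formula of part (b) into the candidate right-hand side of part (a), switches summations, applies $\sum_{\sigma}\varphi^\alpha_{ij}(\sigma)\varphi^\beta_{kl}(\sigma)=\frac{(n-2)!}{d_\alpha}\delta^{\alpha\beta}\delta_{ik}\delta_{jl}$, and recognizes the result as $\omega^{ab}_{ij}(\alpha)$ after using Definition~\ref{Definitdeff0} once more.

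The main obstacle is the identification step in part (a): although $D=Q^{-1}$ is defined abstractly, expressing its matrix elements in the form $\varphi^\alpha_{jc}(\chi_{bk}^{-1})$ requires understanding how block inversion interacts with the representation elements entering $Q$. The Schur-orthogonality approach sidesteps this head-on linear-algebra computation, but it still requires careful bookkeeping of indices and of the factors $d^{\delta_{ab}}$ that distinguish diagonal and off-diagonal blocks of $Q$. Once this identity is in hand, the rest of both parts is routine substitution and relabeling.
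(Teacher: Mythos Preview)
Your plan matches the paper's proof closely: for part (a) the paper substitutes Theorem~\ref{ccomp1}(a) into Definition~\ref{Definitdeff0} and then passes from $D_{jk}^{bc}(\alpha)$ to $\varphi_{kj}^{\alpha}(\chi_{bc}^{-1})$ in a single unexplained line---precisely the identification you correctly flag as the main obstacle; for part (b) it substitutes the inverse relation (stated as a separate Fact just before the theorem) $v_{ij}^{ab}(\alpha)=\sum_{kc}\varphi_{jk}^{\alpha}(\chi_{bc})\omega_{ik}^{ac}(\alpha)$ into Theorem~\ref{ccomp1}(b) and collapses via the homomorphism property, just as you propose. One small mismatch: the inverse relation the paper uses carries \emph{no} factor $d^{\delta_{bc}}$, so it is not literally $v=Q\omega$; your remark that the diagonal factor $d^{\delta_{ac}}$ is ``absorbed into the convention $\chi_{aa}=e$'' does not actually eliminate it, and you need the paper's form of the Fact (without $d^{\delta}$) to land on the stated formula for (b). Your Schur-orthogonality alternative for part (a) does not appear in the paper.
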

Now we are ready to give formulas for matrix elements of an arbitrary operator $X$ in biorthogonal basis $\{|\psi_i^a(\alpha)\>, |\phi_j^b(\alpha)\>\}$ (see also Appendix~\ref{A3}). In this section we give  explicit formulas for matrix elements of partially transposed permutation operators $\opV'(\sigma_{ab})\in \mathcal{M}$ and $\opV(\sigma)\in \mathbb{C}[S(n-1)]$ in above-mentioned basis:
\def\irreps{
Matrix elements of irreducible representations of permutation operators in biorthogonal basis $\{|\psi_i^a(\alpha)\>, |\phi_j^b(\alpha)\>\}$ are given by following formulas:
\be
\begin{split}
[\widetilde{\opV}'_{\alpha}(\sigma_{ab})]_{dl,ck}&=d^{\delta_{ac}}\varphi_{lk}^{\alpha}(f^{-1}_{ab}(\sigma_{ab})\circ \chi_{ac})\delta_{bd},\\
[\widetilde{\opV}_{\alpha}(\sigma )]_{dl,ck}&=\varphi_{lk}^{\alpha}(f_c(\sigma))\delta_{\sigma[c]d},
\end{split}
\ee
where $\varphi_{lk}^{\alpha}(\cdot)$ is matrix elements of permutation operator for permutation from $S(n-2)$ for given irrep $\alpha$.
}

\begin{lemma}\label{ElementsIrreps}
\irreps
\end{lemma}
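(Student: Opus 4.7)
The plan is to deduce the matrix elements directly from Proposition~\ref{actionOnF} together with the biorthogonality of the system $\{|\psi_i^a(\alpha,r)\>, |\phi_j^b(\alpha,r)\>\}$. The first step is to establish biorthogonality as a consequence of Lemma~\ref{CompOfOmegas}: substituting $\omega_{ij}^{ab}(\alpha)=\sum_r|\psi_i^a(\alpha,r)\>\<\phi_j^b(\alpha,r)|$ into the composition rule $\omega_{ij}^{ab}(\alpha)\omega_{kl}^{cd}(\beta)=\delta_{\alpha\beta}\delta^{bc}\delta_{jk}\omega_{il}^{ad}(\alpha)$ and comparing the resulting bilinear form yields $\<\phi_j^b(\alpha,r)|\psi_k^c(\beta,s)\>=\delta_{\alpha\beta}\delta^{bc}\delta_{jk}\delta_{rs}$.

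Next, I would translate the operator-level identities of Proposition~\ref{actionOnF} into vector-level ones. Writing the two formulas there with $\omega_{kl}^{cd}(\alpha)=\sum_r|\psi_k^c(\alpha,r)\>\<\phi_l^d(\alpha,r)|$ and using linear independence of the $\<\phi_l^d(\alpha,r)|$ (which follows from biorthogonality), one may strip off the common right factor to obtain
\be
\opV'(\sigma_{ab})|\psi_k^c(\alpha,r)\>=d^{\delta_{ac}}\sum_i \varphi_{ik}^{\alpha}(f_{ab}^{-1}(\sigma_{ab})\circ \chi_{ac})|\psi_i^b(\alpha,r)\>,
\ee
and analogously $\opV(\sigma)|\psi_i^a(\alpha,r)\>=\sum_k\varphi_{ki}^{\alpha}(f_a(\sigma))|\psi_k^{\sigma[a]}(\alpha,r)\>$. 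The matrix element of an operator $X$ in the biorthogonal basis is by definition $[\widetilde{X}_\alpha]_{dl,ck}=\<\phi_l^d(\alpha,r)|X|\psi_k^c(\alpha,r)\>$ (independent of $r$ by Schur's lemma).

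Finally, I would substitute the two displayed vector-level formulas into this bracket and apply the biorthogonality relation. For the first identity, $\<\phi_l^d(\alpha,r)|\psi_i^b(\alpha,r)\>=\delta_{bd}\delta_{il}$ collapses the sum over $i$ to $i=l$ and produces the Kronecker delta $\delta_{bd}$, giving exactly $[\widetilde{\opV}'_\alpha(\sigma_{ab})]_{dl,ck}=d^{\delta_{ac}}\varphi_{lk}^\alpha(f_{ab}^{-1}(\sigma_{ab})\circ \chi_{ac})\delta_{bd}$. For the second, $\<\phi_l^d(\alpha,r)|\psi_k^{\sigma[c]}(\alpha,r)\>=\delta_{\sigma[c]d}\delta_{kl}$ yields $[\widetilde{\opV}_\alpha(\sigma)]_{dl,ck}=\varphi_{lk}^\alpha(f_c(\sigma))\delta_{\sigma[c]d}$. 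The only mildly delicate step is justifying the stripping-off of the $\<\phi|$ factor in passing from operator to vector identities; this is where the biorthogonality derived in the first step is essential, and I expect this to be the only subtlety, everything else being bookkeeping.
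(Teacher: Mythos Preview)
Your argument is correct and rests on the same two ingredients as the paper's proof: Proposition~\ref{actionOnF} and the biorthogonality of $\{|\psi_i^a\>,|\phi_j^b\>\}$. The only packaging difference is that the paper writes the matrix element as $\frac{1}{m_\alpha}\tr[X\,\omega_{kl}^{cd}(\alpha)]$ (Appendix~\ref{A3}), applies Proposition~\ref{actionOnF} directly inside the trace, and then uses $\tr[\omega_{il}^{bd}(\alpha)]=m_\alpha\delta_{bd}\delta_{il}$, thereby avoiding your intermediate step of stripping off the $\langle\phi|$ factor to pass to vector-level identities.
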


\section{Construction of irreducible representations. Examples of matrix elements}
\label{construct}
This section is split into three main parts. In the first part we focus on the case when $d>n-2$, namely we present matrices of irreps from algebra $\mathcal{M}$ for small numbers of $n$. In the second part we give discussion when condition $d>n-2$ is not fulfilled. For this case we know from general theory~\cite{Cox1} that Walled Brauer Algebra is no longer semisimple. Example that algebra $\mathcal{M}$, even for $d\leq n-2$ is still semisimple is presented. At the end of second part we outline the new problem corresponding with our method of irreps construction for this case. We ask is it possible to say something general about rank of Gram Matrix $Q_{ij}^{ab}(\alpha)$ from Definition~\ref{Q}, when some basis vectors $|\psi_i^a(\alpha,r)\>$ are linearly dependent. Finally in the third part of this section we show how to calculate multiplicities of irreps.
\subsection{Construction of irreducible representations when $d>n-2$. Examples of irreducible representations for $n=3,4,5$.}
\label{s:examples}
Here we give few explicit examples of irreps from algebra $\mathcal{M}$ for various values of $n$, i.e. $n=3,4$ and $5$. For our calculations we use exactly Lemma~\ref{ElementsIrreps}. To do so we have to choose explicit permutations $\pi_k$ from Definition~\ref{vecs0}. In this paragraph we use the simplest one - transposition between elements $k$ and $n-1$, so 
$\pi_k=(kn-1)$. Thanks to this formulas for embedding functions $f_{ab}$ and $f_a$ from equations~\eqref{eq:maps},~\eqref{eq:maps2} have a form
\be
f_{ab}(\sigma)=(bn)(bn-1)\circ \sigma \circ (an-1),\qquad f_a(\sigma)=(\sigma[a]n-1)\circ \sigma \circ (an-1).
\ee
To obtain explicit matrix elements from lemma~\ref{ElementsIrreps} we need also direct form of permutation $\chi_{ab}$ from  Lemma~\ref{Lemmaprop} (see Appnedix~\ref{FullDescription}) in our computational basis
\be
\label{specific2}
\opX(\chi_{ab}) = \left\{ \begin{array}{ll}
d \cdot \text{\noindent
 \(\mathds{1}\)}_{1\ldots n-2} &  , \  \text{for}  \  a=b\\
\text{\noindent
\(\mathds{1}\)}_{1\ldots n-2} &  , \  \text{for}  \  a \ \text{or} \ b=n-1,\\
\opV(ab) &, \  \text{for} \   a \neq b,
\end{array} \right.
\ee
 It is worth to mention here that for irreps which are labelled by symmetric or antisymmetric partitions from $S(n-2)$ equations for matrix elements take very simply form:
\be
\begin{split}
[\widetilde{\opV}'_{(n)}(\sigma_{ab})]_{d1,c1}&=d^{\delta_{ac}}\delta_{bd},\\
[\widetilde{\opV}_{(n)}(\sigma )]_{d1,c1}&=\delta_{\sigma[c]d},
\end{split}
\ee
for symmetric case~\footnote{In this case representations are one-dimensional and equal to 1. They do not depend on permutations $f^{-1}_{ab}(\sigma_{ab})\circ \chi_{ac}$. }, and
\be
\begin{split}
[\widetilde{\opV}'_{(1^n)}(\sigma_{ab})]_{d1,c1}&=d^{\delta_{ac}}\operatorname{sgn}(f^{-1}_{ab}(\sigma_{ab})\circ \chi_{ac})\delta_{bd},\\
[\widetilde{\opV}_{(1^n)}(\sigma )]_{d1,c1}&=\operatorname{sgn}(f_c(\sigma))\delta_{\sigma[c]d},
\end{split}
\ee
for antisymmetric case, where $\operatorname{sgn}(\cdot)$ is signum function. We know that any element from algebra of permutation operators can be obtained by proper combination of its generators $\opV(i,i+1)$, where $i=1,\ldots,n-1$. The same situation we have also for algebra $\mathcal{M}$ of partially transposed permutation operators. Namely to find any element from it, we have to construct proper combination of generators. In our case  set of generators for algebra $\mathcal{M}$ is equal to $\{\opV(12),\opV(23),\ldots,\opV'(n-1,n)=\Phi^+_{n-1,n}\}$.  In representative language we can rewrite them in a form $\widetilde{\opV}_{\alpha}(k,k+1)$, where $k=1,\ldots, n-1$. So to calculate representation of any element from algebra $\mathcal{M}$ is enough to calculate following representations of generators: $\widetilde{\opV}_{\alpha}'(k,k+1)$, where $k=1,\ldots,n-1$. Note that we have only one generator on which partial transposition acts non trivially, i.e. when $k=n-1$, so in this case we have that $a=b=n-1$, so always $f^{-1}_{ab}(\sigma_{ab})\circ \chi_{ac}=e$ for any $c=1,\ldots ,n-1$. Thanks to this equation for matrix elements for antisymmetric case takes very simply form
\be
\label{elel2}
\begin{split}
[\widetilde{\opV}'_{(1^n)}(\sigma_{n-1})]_{d1,c1}&=d^{\delta_{n-1c}}\delta_{n-1d}, 
\end{split}
\ee
where $c=1,\ldots ,n-1$ and $\sigma_{n-1}=\sigma_{n-1,n-1}$.\\
{\bf Case $n=3$} In this case in algebra $\mathcal{M}$ we have only one two-dimensional irreps labelled by partition $\alpha_1=(1)$. List of generators is as follows:
\be
\widetilde{\opV}'_{\alpha_1}(e)=\begin{bmatrix} 1 & 0 \\ 0 & 1 \end{bmatrix}, \  \widetilde{\opV}'_{\alpha_1}(12)=\begin{bmatrix} 0 & 1\\ 1 & 0\end{bmatrix}, \ \widetilde{\opV}'_{\alpha_1}(23)=\begin{bmatrix} 0 & 0  \\ 1 & d\end{bmatrix}.
\ee
{\bf Case $n=4$} In this case in algebra $\mathcal{M}$ we have two three-dimensional irreps labelled by partitions $\alpha_1=(2)$ and $\alpha_2=(1,1)$. 
\begin{itemize}
\item For partition $\alpha_1=(2)$ we have following list of irreducible representations for generators:
\begin{equation}
\begin{split}
\widetilde{\opV}'_{\alpha_1}(e)=\begin{bmatrix}1 & 0 & 0\\ 0 & 1 & 0\\ 0 & 0 & 1 \end{bmatrix}, \ \widetilde{\opV}'_{\alpha_1}(12)=\begin{bmatrix}0 & 1 & 0\\ 1 & 0 & 0\\ 0 & 0 & 1\end{bmatrix}, \ \widetilde{\opV}'_{\alpha_1}(23)=\begin{bmatrix}1 & 0 & 0\\ 0 & 0 & 1\\ 0 & 1 & 0 \end{bmatrix}, \ \widetilde{\opV}'_{\alpha_1}(34)=\begin{bmatrix} 0 & 0 & 0\\0 & 0 & 0 \\1 & 1 & d\end{bmatrix}.
\end{split}
\end{equation}
\item For partition $\alpha_2=(1,1)$ we have following list of irreducible representations for generators:\\
\begin{equation}
\begin{split}
\widetilde{\opV}'_{\alpha_2}(e)=\begin{bmatrix}1 & 0 & 0\\ 0 & 1 & 0\\ 0 & 0 & 1 \end{bmatrix}, \ \widetilde{\opV}'_{\alpha_2}(12)=\begin{bmatrix}0 & -1 & 0\\ -1 & 0 & 0\\ 0 & 0 & -1\end{bmatrix}, \ \widetilde{\opV}'_{\alpha_2}(23)=\begin{bmatrix}-1 & 0 & 0\\ 0 & 0 & 1\\ 0 & 1 & 0 \end{bmatrix}, \ \widetilde{\opV}'_{\alpha_2}(34)=\begin{bmatrix} 0 & 0 & 0\\0 & 0 & 0 \\1 & 1 & d\end{bmatrix}.
\end{split}
\end{equation}
 \end{itemize}
{\bf Case $n=5$} In this case in algebra $\mathcal{M}$ we have three three-dimensional irreps labelled by partitions $\alpha_1=(3)$, $\alpha_2=(2,1)$ and $\alpha_3=(1,1,1)$.
\begin{itemize}
\item For partition $\alpha_1=(3)$ we have following list of irreducible representations for generators:
\be
\begin{split}
 &\widetilde{\opV}'_{\alpha_1}(e)=\begin{bmatrix}1 & 0 & 0 & 0\\ 0 & 1 & 0 & 0\\0 & 0 & 1 & 0 \\ 0 & 0 & 0 & 1\end{bmatrix}, \ \widetilde{\opV}'_{\alpha_1}(12)=\begin{bmatrix}0 & 1 & 0 & 0\\ 1 & 0 & 0 & 0\\0 & 0 & 1 & 0 \\ 0 & 0 & 0 & 1\end{bmatrix}, \ \widetilde{\opV}'_{\alpha_1}(23)=\begin{bmatrix}1 & 0 & 0 & 0\\ 0 & 0 & 1 & 0\\0 & 1 & 0 & 0 \\ 0 & 0 & 0 & 1\end{bmatrix},\\
 &\widetilde{\opV}'_{\alpha_1}(34)=\begin{bmatrix}1 & 0 & 0 & 0\\ 0 & 1 & 0 & 0\\0 & 0 & 0 & 1 \\ 0 & 0 & 1 & 0\end{bmatrix}, \ \widetilde{\opV}'_{\alpha_1}(45)=\begin{bmatrix}0 & 0 & 0 & 0\\ 0 & 0 & 0 & 0\\0 & 0 & 0 & 0 \\ 1 & 1 & 1 & d\end{bmatrix}.
\end{split}
\ee
\item For partition $\alpha_1=(2,1)$ we have following list of irreducible representations for generators:
\be
\begin{split}
 &\widetilde{\opV}'_{\alpha_1}(e)=\begin{bmatrix}1 & 0 & 0 & 0\\ 0 & 1 & 0 & 0\\0 & 0 & 1 & 0 \\ 0 & 0 & 0 & 1\end{bmatrix}, \ \widetilde{\opV}'_{\alpha_1}(12)=\begin{bmatrix}0 & 0 & 0 & 0\\ 0 & 0 & 0 & 0\\0 & 0 & 1 & 0 \\ 0 & 0 & 0 & -1\end{bmatrix}, \ \widetilde{\opV}'_{\alpha_1}(23)=\begin{bmatrix}-\frac{1}{2} & 0 & 0 & 0\\ 0 & 0 & \frac{\sqrt{3}}{2} & 0\\0 & \frac{\sqrt{3}}{2} & 0 & 0 \\ 0 & 0 & 0 & \frac{1}{2}\end{bmatrix},\\
 &\widetilde{\opV}'_{\alpha_1}(34)=\begin{bmatrix}-\frac{1}{2} & 0 & 0 & 0\\ 0 & \frac{1}{2} & 0 & 0\\0 & 0 & 0 & 0 \\ 0 & 0 & 0 & 0\end{bmatrix}, \ \widetilde{\opV}'_{\alpha_1}(45)=\begin{bmatrix}0 & 0 & 0 & 0\\ 0 & 0 & 0 & 0\\0 & 0 & 0 & 0 \\ 0 & 1 & 0 & d\end{bmatrix}.
\end{split}
\ee
\item For partition $\alpha_1=(1,1,1)$ we have following list of irreducible representations for generators:
\be
\begin{split}
 &\widetilde{\opV}'_{\alpha_1}(e)=\begin{bmatrix}1 & 0 & 0 & 0\\ 0 & 1 & 0 & 0\\0 & 0 & 1 & 0 \\ 0 & 0 & 0 & 1\end{bmatrix}, \ \widetilde{\opV}'_{\alpha_1}(12)=\begin{bmatrix}0 & -1 & 0 & 0\\ -1 & 0 & 0 & 0\\0 & 0 & -1 & 0 \\ 0 & 0 & 0 & -1\end{bmatrix}, \ \widetilde{\opV}'_{\alpha_1}(23)=\begin{bmatrix}-1 & 0 & 0 & 0\\ 0 & 0 & -1 & 0\\0 & -1 & 0 & 0 \\ 0 & 0 & 0 & -1\end{bmatrix},\\
 &\widetilde{\opV}'_{\alpha_1}(34)=\begin{bmatrix}-1 & 0 & 0 & 0\\ 0 & -1 & 0 & 0\\0 & 0 & 0 & 1 \\ 0 & 0 & 1 & 0\end{bmatrix}, \ \widetilde{\opV}'_{\alpha_1}(45)=\begin{bmatrix}0 & 0 & 0 & 0\\ 0 & 0 & 0 & 0\\0 & 0 & 0 & 0 \\ 1 & 1 & 1 & d\end{bmatrix}.
\end{split}
\ee
\end{itemize}
\subsection{Construction of irreducible representations when $d\leq n-2$.}
\label{mniejsze}
In this paragraph we consider situation when local dimension of Hilbert space $d\leq n-2$. Then dimension of algebra $\mathcal{A}$ decreases in comparison with Walled Brauer Algebra. In this case  Walled Brauer Algebra~\cite{Brundan1} has $\operatorname{dim}B_{n-1,1}(d)=n!$ and dimension of our algebra is equal to the number of linearly independent permutation operators. It is clear that already for $d\leq n-1$ algebra $\mathcal{A}$ has smaller dimension than $n!$ and these two algebras are not isomorphic anymore.
Another observation follows from Theorem~\ref{theoremForMatrix}. Namely from this theorem we know that for $d\leq n-2$ Gram matrix from Definition~\ref{Q} calculated on some partition $\alpha$ which runs over partitions of $n-2$ may not be invertible. In other words we see that Gram matrix  could not have full rank for some partition $\alpha$. Then it is clear that some of the basis vectors $|\psi_i^a(r,\alpha)\>$ from equation~\eqref{n10} are linearly dependent, so some formulas are not valid anymore. For example in this case can not use directly Definition~\ref{Definitdeff0} because matrix $D_{jc}^{bk}(\alpha)$ does not exist, so we do not have explicit formulas for irreps matrix elements like in Lemma~\ref{ElementsIrreps}. Here we give sketch of solution how we can omit this problem. \\
 Suppose that we choose from the set $\left\{|\psi_i^a(r,\alpha)\>\right\}$ of linearly dependent vectors a subset of linearly independent vectors which span new nonorthogonal basis. Suppose also that all basis elements are labelled by indices form the set $I=\left\{(i,a)\right\}$, where $a=1,\ldots,n-1$ and $i=1,\ldots,d_{\alpha}$. Then thanks to our previous considerations (or Appendix~\ref{FullDescription})  we can built operators using only vectors which are labelled by indices from set $I$:
\be
\sum_r |\psi^a_i(r,\alpha)\>\< \psi^b_j(r,\alpha)|=\text{\noindent
\(\mathds{1}\)}\ot |\psi_i^a(\alpha)\>\<\psi_j^b(\alpha)| =v_{ij}^{ab}(\alpha ),
\ee
where $(i,a)\in I$, and  $(j,b)\in I$. It is true that subset of operators
\be
\left\{v_{ij}^{ab}(\alpha) \ : \ (i,a)\in I, (j,b)\in I\right\}
\ee
spans a new operator basis for algebra $\mathcal{M}$.
\begin{example}
\label{eexx}
As an example we present here case when $n=4$, $d=2$ and we consider antisymmetric subspace of algebra $\mathcal{M}$ labelled by partition $\alpha=(1,1)$. We choose this partition because only in this case our Gram matrix does not have full rank (it has rank equal to two). We have three allowed vectors from which we can construct our basis:
\be
\label{dots}
|\psi_1\>=\frac{1}{\sqrt{2}}\left(|\cdot 10\cdot\>-|\cdot 01 \cdot\>\right), \ |\psi_2\>=\frac{1}{\sqrt{2}}\left(| 0 \cdot 1 \cdot\>-| 1\cdot 0 \cdot \>\right), \ |\psi_3\>=\frac{1}{\sqrt{2}}\left(|01 \cdot \cdot\>-|10 \cdot \cdot\>\right),
\ee
where $|\cdot \cdot\>=\sum_i|ii\>$. One can see that vectors from equation~\eqref{dots} are linearly dependent, namely we have that $|\psi_3\>=|\psi_2\>-|\psi_1\>$. Thanks to this we can assume that antisymmetric subspace i spanned by the set $\mathcal{M}=\{|\psi_1\>,|\psi_2\>\}$. Using vectors from the set $\mathcal{M}$ which are linearly independent we construct operator basis in our subspace, i.e. $v_{ab}=|\psi_a\>\<\psi_b|$~\footnote{For simplicity we keep here only indeces $a,b$.}, where $a,b=1,2$:
\be
\label{basis}
\begin{split}
v_{11}&=\begin{bmatrix}2 & 0\\ 1 & 0 \end{bmatrix}, \quad v_{12}=\begin{bmatrix}1 & 0\\ 2 & 0 \end{bmatrix}, \quad v_{21}=\begin{bmatrix}0 & 2\\ 0 & 1 \end{bmatrix}, \quad v_{22}=\begin{bmatrix}0 & 1\\ 0 & 2 \end{bmatrix}.
\end{split}
\ee
Rest of allowed basis operators $v_{3b}, v_{a3}$, for $a,b=1,2,3$ can be expressed in terms of operators from equation~\eqref{basis} because of linearly dependence between vectors $|\psi_a\>$, for $a=1,2,3$:
\be
\begin{split}
v_{13}&=v_{12}-v_{11}=\begin{bmatrix}-1 & 0\\ 1 & 0 \end{bmatrix},\quad v_{31}=v_{21}-v_{11}=\begin{bmatrix}-2 & 2\\ -1 & 1 \end{bmatrix},\quad v_{23}=v_{22}-v_{21}=\begin{bmatrix}0 & -1\\ 0 & 1 \end{bmatrix},\\
v_{32}&=v_{22}-v_{12}=\begin{bmatrix}-1 & 1\\ -2 & 2 \end{bmatrix},\quad v_{33}=v_{11}+v_{22}-v_{12}-v_{21}=\begin{bmatrix}1 & -1\\ -1 & 1 \end{bmatrix}.
\end{split}
\ee
Our next goal is to find matrix elements of partially transposed permutation operators $V'(\sigma)$, where $\sigma\in S(4)$.
To do this we have to find action of every operator $V'(\sigma)$ on basis vectors $|\psi_1\>,|\psi_2\>$, i.e. we have to find coefficient $a_{ij}$ in the following linear combination $\opV'(\sigma)|\psi_i\>=\sum_j a_{ij}|\psi_j\>$, where $i,j=1,2$. Thanks to this for permutation from group $S(3)\subset S(4)$ we have:
\be
\begin{split}
\opV(e)&=\frac{1}{3}\left(2v_{11}+2v_{22}-v_{12}-v_{21}\right)= \begin{bmatrix}1 & 0 \\ 0 & 1 \end{bmatrix}, \quad \opV(12)=\frac{1}{3}\left(2v_{12}+2v_{21}-v_{11}-v_{22}\right)=\begin{bmatrix}0 & 1 \\ 1 & 0 \end{bmatrix}\\
\opV(13)&=\frac{1}{3}\left(2v_{11}-v_{12}-v_{21}-v_{22}\right)=\begin{bmatrix}1 & -1 \\ 0 & -1 \end{bmatrix}, \quad \opV(23)=\frac{1}{3}\left(2v_{22}-v_{12}-v_{21}-v_{11}\right)=\begin{bmatrix}-1 & 0 \\ -1 & 1 \end{bmatrix},\\
\opV(123)&=\frac{1}{3}\left(2v_{21}-v_{11}-v_{12}-v_{22}\right)=\begin{bmatrix}-1 & 1 \\ -1 & 0 \end{bmatrix},\quad \opV(132)=\frac{1}{3}\left(2v_{12}-v_{11}-v_{21}-v_{22}\right)=\begin{bmatrix}0 & -1 \\ 1 & -1 \end{bmatrix}.
\end{split}
\ee
Now permutation operators for which partial transposition acts non-trivially:
\be
\begin{split}
&\opV'(14)=v_{11}, \quad \opV'(24)=v_{22}, \quad \opV'(23)(14)=-v_{11}, \quad \opV'(13)(24)=-v_{22},\\
&\opV'(124)=v_{21}, \quad \opV'(243)=v_{23}, \quad \opV'(1243)=-v_{23},\quad \opV'(1324)=-v_{21},\\
&\opV'(1423)=-v_{11}.
\end{split}
\ee
\be
\begin{split}
\opV'(34)&=v_{11}+v_{22}-v_{12}-v_{21}=v_{33}=\begin{bmatrix}1 & -1\\ -1 & 1\end{bmatrix},\quad \opV'(12)(34)=v_{12}+v_{21}-v_{11}-v_{22}=-v_{33}=\begin{bmatrix}-1 & 1\\ 1 & -1\end{bmatrix},\\
\opV'(143)&=-v_{11}-v_{12}=-v_{13}=\begin{bmatrix}1 & 0\\ -1 & 0\end{bmatrix},\quad \opV'(234)=v_{22}-v_{12}=v_{32}=\begin{bmatrix}-1 & 1\\ -2 & 2\end{bmatrix},\\
\opV'(134)&=v_{11}-v_{21}=-v_{31}=\begin{bmatrix}2 & -2\\1 & -1\end{bmatrix},\quad \opV'(1234)=v_{21}-v_{11}=v_{31}=\begin{bmatrix}-2 & 2\\-1 &1\end{bmatrix},\\
\opV'(1342)&=v_{12}-v_{22}=-v_{32}=\begin{bmatrix}1 & -1\\2 &-2\end{bmatrix},\quad \opV'(1432)=v_{12}-v_{11}=v_{13}=\begin{bmatrix}-1 & 0\\1 & 0\end{bmatrix}.
\end{split}
\ee
Summarizing we have shown that every operator from antisymmetric subspace can be written in terms of four operator $v_{ab}(\alpha)$, where $a,b=1,2$, so set $\{v_{ab}(\alpha)\}_{a,b=1}^2$ span required basis. 
\end{example}
One can see that we have here new important question. Namely we ask how rank of Gram matrices on every partition in algebra $\mathcal{M}$ depends on relation between local dimension of Hilbert space and number of subsystems. Full solution of this problem has considerable practical significance. We know that for $d> n-2$ matrices $Q_{ij}^{ab}(\alpha)$ have always full rank (see Theorem~\ref{theoremForMatrix}), so they are always invertible. Thanks to this and our construction described in Section~\ref{OurResults} we know that algebra $\mathcal{M}$ is semisimple.
For the case when $d \leq n-2$ from papers~\cite{Cox1,Brundan1} we know that Walled Brauer Algebra is no longer semisimple, but our algebra $\mathcal{M}$ still is (see Example~\ref{eexx}). Unfortunately matrices $Q_{ij}^{ab}(\alpha)$ for various $\alpha$ may not have full rank - so they are irreversible in general. To deal this problem we have to know how to choose from our linearly dependent set of vectors $\{|\psi_i^a(\alpha,r)\>\}$ new set which is linearly independent. Then we can construct new  Gram matrices $\widetilde{Q}_{ij}^{ab}(\alpha)$ with smaller dimensions and after that construct appropriate irreps using the same argumentation like in invertible case.  We have some partial results towards general resolution of this problem, which will be published elsewhere.

\subsection{Multiplicities of Irreps}
\label{multi}
In previous sections we have shown how to decompose algebra of partially transposed operators $\mathcal{A}$ into irreducible representations but we did not mention about their multiplicities. In this paragraph we will give short discussion related to this problem. First we recall here that in full Hilbert space $\mathcal{H}^{\ot n}$ we can distinguish  two subspaces $\mathcal{H}_{\mathcal{M}}$ and orthogonal to it $\mathcal{H}_{\mathcal{M}}^{\perp}=\mathcal{H}_{\mathcal{N}}$. 
Namely we have that $\mathcal{H}^{\ot n}=\mathcal{H}_{\mathcal{M}} \oplus \mathcal{H}_{\mathcal{N}}$. Thanks to this splitting we have two kinds of irreps. In subspace $\mathcal{H}_{\mathcal{M}}$ irreducible representations are labelled by partitions $\alpha_{\mathcal{M}} \vdash n-2$. 
\begin{corollary}
From Definition~\ref{vecs0} it follows directly that multiplicity of given irrep in $\mathcal{H}_{\mathcal{M}}$ is equal to multiplicity of irrep $\alpha$ in $S(n-2)$ which is known from the classical combinatorial rules~\cite{RWallach}. In subspace $\mathcal{H}_{\mathcal{N}}$ irreps are labelled by partitions $\alpha_{\mathcal{N}} \vdash n-1$.
\end{corollary}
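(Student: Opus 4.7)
The plan is to read the multiplicity statement directly off Definition~\ref{vecs0} combined with Schur-Weyl duality applied to the first $n-2$ subsystems, and to handle the claim about $\mathcal{H}_{\mathcal{N}}$ by observing that the genuinely partially transposed part of $\mathcal{A}$ acts trivially there.

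First I would apply the Schur-Weyl decomposition~\eqref{decomp} to $(\mathbb{C}^d)^{\otimes (n-2)}$, obtaining $(\mathbb{C}^d)^{\otimes (n-2)} \cong \bigoplus_{\alpha \vdash n-2}\mathcal{H}_{\alpha}^{\mathcal{U}} \otimes \mathcal{H}_{\alpha}^{\mathcal{S}}$. The vectors $|\phi_i(\alpha,r)\>$ appearing in Definition~\ref{vecs0} then form a corresponding Schur basis: $\alpha \vdash n-2$ labels the $S(n-2)$ irrep, $i$ indexes $\mathcal{H}_{\alpha}^{\mathcal{S}}$, and $r$ enumerates the $\dim \mathcal{H}_{\alpha}^{\mathcal{U}}$ multiplicity copies, whose count is the classical Schur-Weyl multiplicity computable from the hook-content formula (see~\cite{Fulton,RWallach}).

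Next I would note that the prescription $|\psi_i^k(\alpha,r)\> = \sqrt{d}\,\opV(\pi_k)|\phi_i(\alpha,r)\>|\Phi_+\>$ is linear and injective in $|\phi_i(\alpha,r)\>$, since $\opV(\pi_k)$ is unitary and $|\Phi_+\>$ has full Schmidt rank, so the multiplicity label $r$ transfers to $\mathcal{H}_{\mathcal{M}}$ with no mixing between different values of $r$. The auxiliary index $k \in \{1,\ldots,n-1\}$ instead enlarges the internal dimension of a single irrep of $\mathcal{M}$; this is already reflected in the $(n-1)m_{\alpha}$ block size of the Gram matrix $Q_{ij}^{ab}(\alpha)$ of Definition~\ref{Q}. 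Consequently, in $v_{ij}^{ab}(\alpha) = \sum_r |\psi_i^a(\alpha,r)\>\<\psi_j^b(\alpha,r)|$ the sum over $r$ traces out precisely the multiplicity space, so the multiplicity of the $\mathcal{M}$-irrep labelled by $\alpha$ in $\mathcal{H}_{\mathcal{M}}$ coincides with the Schur-Weyl multiplicity of $\alpha$ in $(\mathbb{C}^d)^{\otimes (n-2)}$, proving the first statement.

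For the second statement I would use equation~\eqref{partialy} to show that every operator $\opV'(\sigma_{ab})$ with $\sigma_{ab}(n)\neq n$ vanishes on $\mathcal{H}_{\mathcal{N}} = \mathcal{H}_{\mathcal{M}}^{\perp}$, since each of its summands contains a factor $|\widetilde{\Phi}^+\>_{bn}$ in its output and therefore lies in $\mathcal{H}_{\mathcal{M}}$. Hence $\mathcal{A}|_{\mathcal{H}_{\mathcal{N}}}$ reduces to the subalgebra $\mathbb{C}[S(n-1)]$ acting on the first $n-1$ subsystems, and Schur-Weyl duality for $S(n-1)$ labels its irreps by $\alpha_{\mathcal{N}} \vdash n-1$. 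There is no real obstacle here: the whole corollary is a bookkeeping consequence of the meaning of the four indices $(\alpha,i,r,k)$ in Definition~\ref{vecs0}, and the only mild care needed is to confirm that $k$ feeds into the internal dimension of an $\mathcal{M}$-irrep rather than into its multiplicity.
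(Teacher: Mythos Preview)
Your proposal is correct and is essentially an explicit unpacking of what the paper leaves implicit: the paper gives no proof of this corollary beyond the phrase ``follows directly from Definition~\ref{vecs0}'', and your argument is precisely the natural reading of that claim --- Schur--Weyl on the first $n-2$ factors identifies $r$ as the multiplicity label, the map $|\phi\rangle\mapsto \sqrt{d}\,\opV(\pi_k)|\phi\rangle|\Phi_+\rangle$ is injective for each fixed $k$, and the index $k$ contributes to the internal dimension of the $\mathcal{M}$-irrep (as witnessed by the $(n-1)m_\alpha$ size of $Q$). Your treatment of $\mathcal{H}_{\mathcal{N}}$ via equation~\eqref{partialy} is also the intended one; the only point worth making explicit is that $\mathcal{A}$ is closed under adjoints (since $(\opV'(\sigma))^\dagger=\opV'(\sigma^{-1})$), which is what guarantees that $\mathcal{H}_{\mathcal{N}}=\mathcal{H}_{\mathcal{M}}^\perp$ is itself $\mathcal{A}$-invariant and hence carries a genuine representation of $\mathbb{C}[S(n-1)]$.
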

From representation theory for symmetric group we know that $S(n)$ is centralizer of $U(d)$, so we have well known Schur-Weyl duality~\cite{Audenaert2006-notes},~\cite{RWallach} for permutation operators $\opV(\sigma)$, like in equation~\eqref{eq:swapdec}. For partially transposed permutation operators  we have quite similar situation~\cite{Kimura1}:
\be
\label{decMult}
\opV'(\sigma)=\bigoplus_{\alpha'} \text{\noindent
\(\mathds{1}\)}_{\alpha'}^{U(d)}\ot \widetilde{\opV}'_{\alpha'}(\sigma),
\ee
where $\alpha'$ runs over two possible kinds of partitions: $(\alpha_{\mathcal{N}},\ldots,-1)$ or $(\alpha_{\mathcal{M}},\ldots,0)$ of length $d$. One can see that we have some abuse of notation here. In this section we want to establish some correspondence between our work and method of calculations of irreps presented in~\cite{Kimura1}. In most cases length of $\alpha_{\mathcal{N}}$ and $\alpha_{\mathcal{M}}$ is smaller than $d$, so in empty places we put zeros.  Suppose that we have some $\alpha_{\mathcal{N}}=(2,1)$ for let us say $d=5$. That notation $(\alpha_{\mathcal{N}},\ldots,-1)$  is nothing else like $(2,1,0,0,-1)$ with totally length equal to $d=5$. This same situation we have of course for partitions $\alpha_{\mathcal{M}}$. Suppose that $\alpha_{\mathcal{M}}=(1,1)$ and $d=2$, then in this case we have $(1,1)$ with totally length equal to two.\\
Form formula~\eqref{decMult} it follows directly that multiplicity of any irrep $\widetilde{\opV}'_{\alpha'}(\sigma)$ is given by dimension of unitary group $U(d)$(see for example~\cite{Bump},~\cite{Sepanski},\cite{Roy1}).
Let us recall here this theorem which tells us how to find $\operatorname{dim}U(d)$ for given irrep $\alpha$:
\begin{theorem}
\label{DDim}
The irreducible representations of $U(d)$ may be indexed by non increasing length-$d$ integer sequences: $\alpha=(\alpha_1,\ldots,\alpha_d)$. If irreducible representation is indexed by $\alpha$, then its dimension is 
\be
\label{dim}
d_{\alpha}=\prod_{1\leq i<j\leq d}\frac{\alpha_i-\alpha_j+j-i}{j-i}.
\ee
\end{theorem}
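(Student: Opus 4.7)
The plan is to obtain the formula as the specialisation at the identity of the Weyl character formula for $U(d)$. First I would recall that each irreducible rational representation $V_\alpha$ with highest weight $\alpha=(\alpha_1,\ldots,\alpha_d)$ has character on a diagonal element $\mathrm{diag}(x_1,\ldots,x_d)$ given by the Jacobi--Weyl bialternant
\be
\chi_\alpha(x_1,\ldots,x_d)=\frac{\det\left(x_i^{\alpha_j+d-j}\right)_{i,j=1}^{d}}{\det\left(x_i^{d-j}\right)_{i,j=1}^{d}},
\ee
with denominator equal to the Vandermonde determinant $\prod_{i<j}(x_i-x_j)$. (For weights with negative entries one may first shift all $\alpha_i$ by a common positive constant to reduce to the polynomial case, since both $V_\alpha$ and the right-hand side of the claim are invariant under such a shift up to tensoring with a determinant power.) The dimension is $d_\alpha=\chi_\alpha(1,\ldots,1)$, which requires resolving an indeterminate $0/0$.

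Second, I would regularise by the substitution $x_i=q^{d-i}$ and let $q\to 1$. The key observation is that under this substitution any bialternant $\det(x_i^{\mu_j})$ becomes a genuine Vandermonde in the new variables $y_j=q^{\mu_j}$, because $x_i^{\mu_j}=(q^{\mu_j})^{d-i}=y_j^{\,d-i}$; a row reversal identifies the determinant with $\prod_{j<k}(y_j-y_k)$ up to a sign depending only on $d$, which is common to the numerator and denominator and therefore cancels in the ratio. Applying this with $\mu_j=\alpha_j+d-j$ in the numerator and $\mu_j=d-j$ in the denominator yields
\be
\chi_\alpha(q^{d-1},q^{d-2},\ldots,1)=\prod_{i<j}\frac{q^{\alpha_i+d-i}-q^{\alpha_j+d-j}}{q^{d-i}-q^{d-j}}.
\ee

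Third, I would extract the limit factor by factor. Writing $q^A-q^B=q^B(q^{A-B}-1)$ in each pair and cancelling the common $q^{d-j}$ between numerator and denominator, each term reduces to $q^{\alpha_j}\cdot(q^{\alpha_i-\alpha_j+j-i}-1)/(q^{j-i}-1)$. As $q\to 1$ the prefactor $q^{\alpha_j}$ tends to $1$, and using $\lim_{q\to 1}(q^k-1)/(q^\ell-1)=k/\ell$ the remaining fraction tends to $(\alpha_i-\alpha_j+j-i)/(j-i)$; the product over pairs $i<j$ is exactly the claimed Weyl dimension formula.

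The only real obstacle is bookkeeping: verifying that the row-reversal signs really cancel between the two determinants, and that all the residual $q$-powers combine to a factor which tends to $1$ rather than to $0$ or $\infty$. Once that is settled the limit itself is routine. An alternative route which bypasses the limit entirely would invoke Stanley's principal-specialisation formula for Schur polynomials together with the Weyl denominator identity, but the bialternant-plus-limit argument above is the most self-contained and is the standard textbook derivation of the Weyl dimension formula for $U(d)$.
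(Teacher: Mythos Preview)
Your argument is the standard principal--specialisation derivation of the Weyl dimension formula and is correct as written; the bookkeeping worries you flag (the row--reversal sign and the residual powers of $q$) are real but easily discharged, since the same row permutation is applied to both bialternants and the leftover factor in each $(i,j)$ term is precisely $q^{\alpha_j}$, which tends to $1$.

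However, there is nothing to compare against: the paper does not prove this theorem at all. It is merely \emph{recalled} from the literature (with references to Bump, Sepanski and Roy--Scott) as an auxiliary tool for computing multiplicities of the irreducible blocks $\widetilde{\opV}'_{\alpha'}(\sigma)$ via $\dim U(d)$. So your proposal is not an alternative route to the paper's proof but rather a self--contained justification where the paper simply cites a standard result. If you want to include it, it would function as an appendix--level supplement rather than a replacement for anything in the main text.
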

Thanks to this we obtain decomposition any operator $\opV'(\sigma)$ into irreps together with multiplicities.
\begin{example}
\label{exx}
We present here one example how to apply above discussion to obtain required multiplicities. Let us consider case when $n=4$ and $d=4$. Thanks to equation~\eqref{decMult} we know that allowed partitions $\alpha'$ of length $d$ belong to the set
\be
\label{ir}
\begin{split}
\{\alpha_1',\alpha_2',\alpha_3',\alpha_4',\alpha_5'\}&=\{\underbrace{(3,0,0,-1),(2,1,0,-1),(1,1,1,-1)}_{\text{irreps from $\mathcal{H}_{\mathcal{N}}$}},\underbrace{(2,0,0,0),(1,1,0,0)}_{\text{irreps from $\mathcal{H}_{\mathcal{M}}$}}\},\\
\{\alpha_{\mathcal{N}}^{(1)},\alpha_{\mathcal{N}}^{(2)},\alpha_{\mathcal{N}}^{(3)}\}&=\{(3),(2,1),(1,1,1)\}, \qquad \  \{\alpha_{\mathcal{M}}^{(1)},\alpha_{\mathcal{M}}^{(2)}\}=\{(2),(1,1)\}.
\end{split}
\ee
Using directly Theorem~\ref{DDim} we have multiplicities of irreps from~\eqref{ir}:
\be
\begin{split}
& \mult(\alpha_1')=70,\quad \mult(\alpha_2')=64, \quad \mult(\alpha_3')=10,\\
& \mult(\alpha_4')=10, \quad \mult(\alpha_5')=6.
\end{split}
\ee
One can see that we can easy check correctness of above results. We know that dimension of full Hilbert space in this case is equal to $d^n=256$. Now calculate sum over dimensions of all irrpes and taking into account their multiplicities we have:
\be
70 \cdot 1+64 \cdot 2+10\cdot 1+10\cdot 3+6\cdot 3=256.
\ee 
\end{example}

\begin{example}
We present here one example how to apply above discussion to obtain required multiplicities but for more tricky case than in the Example~\ref{exx}. Namely we illustrate method from this section for the case when some irreps do not occur in decomposition~\eqref{decMult}. Let us consider $n=4$ and $d=2$ which is the same case like in Example~\ref{eexx}. All method are the same like in Example~\ref{exx} but here we have discard these irreps for which number of rows is smaller or equal to $d=2$. In other words multiplicity of discarded irreps is equal to zero. Wee see that these irreps are those labelled by $\alpha_{\mathcal{N}}^{(2)}$ and $\alpha_{\mathcal{N}}^{(3)}$. Thanks to this decomposition~\eqref{ir} takes a form
\be
\label{ir2}
\begin{split}
& \{\alpha_1',\alpha_2',\alpha_3',\alpha_4',\alpha_5'\}=\{\underbrace{(3,-1)}_{\text{irreps from $\mathcal{H}_{\mathcal{N}}$}},\underbrace{(2,0),(1,1)}_{\text{irreps from $\mathcal{H}_{\mathcal{M}}$}}\},\\
& \{\alpha_{\mathcal{N}}^{(1)}\}=\{(3)\}, \qquad \  \{\alpha_{\mathcal{M}}^{(1)},\alpha_{\mathcal{M}}^{(2)}\}=\{(2),(1,1)\}.
\end{split}
\ee
Using directly Theorem~\ref{DDim} we have multiplicities of irreps from~\eqref{ir}:
\be
\begin{split}
\mult(\alpha_1')=5, \quad \mult(\alpha_4')=3, \quad \mult(\alpha_5')=1.
\end{split}
\ee
One can see that we can easy check correctness of above results. We know that dimension of full Hilbert space in this case is equal to $d^n=16$. Now calculate sum over dimensions of all irrpes and taking into account their multiplicities we have:
\be
5 \cdot 1+3 \cdot 3+1\cdot 2=16.
\ee 
Note that we have to use here result from Example~\ref{eexx} which tells us that dimension of irrep labeled by $(1,1)$ is equal to two.
\end{example}

\subsection{Exemplary application}
\label{application}
 In this paper we present solution of our main problem for the case when we have only one term $U^*$. It is enough to investigate when $n-$party $U^{\ot (n-1)}\ot U^*$ invariant states are PPT states respect to transposition on last subsystem. Let us mention here that following example exactly correspond with the simplest case from~\cite{Eggeling1}, so we present here only the crucial steps how to use methods from our work to obtain correct results. As an example let us consider sate which is mixture of three Young projectors for the case when $n=3$:
\be
\label{rhoPPT}
\rho=\widetilde{a}_{\lambda_1}\opP_{\lambda_1}+\widetilde{a}_{\lambda_2}\opP_{\lambda_2}+\widetilde{a}_{\lambda_3}\opP_{\lambda_3},
\ee
where $\lambda_1=(1,1,1)$, $\lambda_2=(2,1)$, $\lambda_3=(3)$ and $\{\widetilde{a}_{\lambda_i}\}_{i=1}^3$  are positive  coefficients. We know that projector $\opP_{\lambda_1}$ projects onto antisymmetric subspace, $\opP_{\lambda_3}$ onto symmetric subspace and finally projector $\opP_{\lambda_2}$ projects onto some nontrivial subspace. 
According to~\cite{Eggeling1} we redefine coefficients in equation~\eqref{rhoPPT} using formulas:
\be
\begin{split}
a_{\lambda_1}&=\frac{1}{6}d(d-1)(d-2)\widetilde{a}_{\lambda_1},\\
a_{\lambda_2}&=\frac{2}{3}d(d^2-1)\widetilde{a}_{\lambda_3},\\
a_{\lambda_3}&=\frac{1}{6}d(d+1)(d+2)\widetilde{a}_{\lambda_2},
\end{split}
\ee
where dimension of local Hilbert space satisfies condition $d>2$, because then $\operatorname{dim}\opP_{\lambda_1}\neq 0$.
Because $\rho$ is density operator we assume that $\rho \geq 0$ and $\tr \rho=1$, so we obtain first set of the conditions on the coefficients $\{a_{\lambda_i}\}_{i=1}^3$. Additionally we require that our density operator $\rho$ have a PPT property respect to transposition over last subsystem. To ensure this we have to have $\rho'\geq 0$, and this gives us second set of conditions on coefficients $\{a_{\lambda_i}\}_{i=1}^3$. We know that $\rho$ is mixture of permutation operator and $\rho'$ is mixture of its partial transpositions, so we can represent every operator $\opV'(\sigma)$ in our operator basis from Definition~\ref{Definitdeff0} ( exactly like in Section~\ref{s:examples} ) and  get quite easily eigenvalues of $\rho'$. Now combining all condition we reduce our problem from 3-dimensional problem to 2-dimensional problem and obtain the region like in Figure~\ref{fig1} of all allowed values of coefficients $a_{\lambda_2}, a_{\lambda_1}$. 
\begin{figure}[ht!]
\includegraphics[width=0.4\textwidth, height=0.3\textheight]{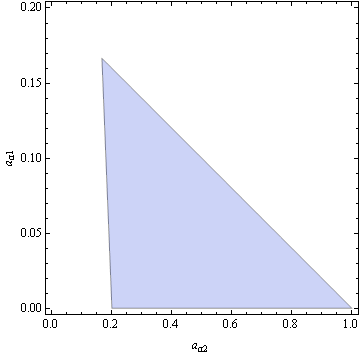}
\caption{\textbf{(Color online) Allowed region of coefficients $a_{\lambda_1}$ and $a_{\lambda_2}$ for which our $U^{\ot 2}\ot U^*$ states have PPT property. Let us notice that this result is dimension independent for any $d>2$. It corresponds exactly with result from~\cite{Eggeling1}.}
\label{fig1}}
\end{figure}

\section{Conclusions}
In this paper we have shown how to construct matrix elements of irreducible representations of partially transposed permutation operators on last subsystem (Lemma~\ref{ElementsIrreps}). Our considerations are  inspired by Schur-Weyl duality and theory of Walled Brauer Algebras. We give here direct method of construction for an arbitrary number of subsystems $n$ and local dimension $d$. Our results come from  very simple observation, that every operator $\opV(\sigma)$ after partial transposition is some mixture of maximally entangled states every time when $\sigma(n)\neq n$. This observation allow us to construct nonorthogonal basis of vectors~\eqref{n10} from which using detailed discussion about Gram matrices (Definition~\ref{Q}, Theorem~\ref{theoremForMatrix}) we are able to built operator basis $\omega_{ij}^{ab}(\alpha)$, orthogonal in Hilbert-Schmidt norm (see Definition~\ref{Definitdeff0} and Lemma~\ref{CompOfOmegas}). 
We can ask here is it possible to generalize this method for larger number of conjugations, or in other words for bigger number of partial transpositions? It seems that the answer is positive, but already two transpositions make our situation much more complicated, so problem is still open.

\section{Acknowledgment}
Authors would like to thank Aram Harrow and Mary Beth Ruskai for valuable discussion. M.S. would like to thank also to Institute for Theoretical Physics, University of Wroc{\l}aw for hospitality where some part of this work was done. M. S. is supported by the International PhD Project "Physics of future quantum-based information technologies": grant MPD/2009-3/4 from Foundation for Polish Science. Authors are also supported by ERC grant QOLAPS (291348). 
Part of this work was done in National Quantum Information Centre of Gda\'nsk.

\section{Appendix}
\subsection{Proofs of the theorems from chapter~\ref{OurResults}}
\label{FullDescription}
This section we start from basic definition for further considerations:
\begin{definition}
\label{defIso}
For given permutation $\pi_k\in S(n-1)$, where $k=1,\ldots,n-1$ which satisfies $\pi(n-1)=k$ we define map $\operatorname{F}_{k}:\mathcal{H}^{\ot (n-2)}\rightarrow \mathcal{H}^{\ot n}$ as follows
\be
\operatorname{F}_k|\phi\>=\sqrt{d}\opV(\pi_k)|\phi\>|\Phi_+\>_{n-1,n},
\ee
where $|\Phi_+\>=\frac{1}{\sqrt{d}}\sum_{l=1}^d|ll\>_{n-1,n}$ is maximally entangled state.
\end{definition}
Note that for every fixed natural number $k$ we can define many maps $\operatorname{F}_k$ because we have only weak constraint that $\pi(n-1)=k$ and it is clear that we can find many permutation from $S(n-1)$ satisfy this property. Reader notices that mapping defined in equation~\eqref{eq:cos} is simply composition of maps $\operatorname{F}_k$ according to formula $\mathcal{F}_{ab}^t(\cdot) \equiv \operatorname{F}_b (\cdot) \operatorname{F}_a^{-1}$. Now if we consider basis of irreps  $|\phi(\alpha,r)\>$, where $r$ is multiplicity and $\alpha$ is partition we can write:
\be
\label{vveeccss}
|\psi_i^k(\alpha,r)\>=\operatorname{F}_k|\phi_i(\alpha,r)\>.
\ee
Moreover we have following

\begin{definition}
Let  $\mathcal{H}_M^{\alpha, r}$ be linear space defined as follows
\be
 \mathcal{H}_M^{\alpha, r}=\operatorname{span}_{\mathbb{C}}\left\{|\psi_i^k(\alpha,r)\> \ | \ |\psi_i^k(\alpha,r)\> \in S_M^{\alpha, r}\right\},
\ee
where
\be
S_M^{\alpha, r}=\left\{|\psi_{i}^k(\alpha,r)\>\in \mathcal{H}^{\ot n} \ | \ 1\leq k \leq n-1, |\phi_i(\alpha,r)\>\in\mathcal{H}^{\ot n-2}\right\}.
\ee
\end{definition}
Above Hilbert space is invariant under action of group $S(n-1)$ which is crucial property in our construction. 

Next important property which we would like to now is how to calculate scalar product between vectors 
$|\psi_i^k(\alpha,r)\>$ from the same invariant subspace $\mathcal{H}_M^{\alpha, r}$ i.e. labeled by the same index $\alpha$.  We can find answer in next lemma:

\begin{replemma}{Lemmailsk}
\lemmailsk
\end{replemma}

\begin{proof}
We prove our statement by direct calculations using Definition~\ref{defIso}:
\be
\begin{split}
&\<\psi_i^a(\alpha,r)|\psi_j^b(\alpha,r)\>=d\cdot \<\Phi_+|\<\phi_i(\alpha,r)|\Pi_{a}^{-1}\Pi_b|\phi_j(\alpha,r)\>\Phi_+\>=
d\tr\left(|\Phi_+\>\<\Phi_+|\ot |\phi_i(\alpha,r)\>\<\phi_j(\alpha,r)|\Pi_a^{-1}\Pi_b\right)=\\
&=\tr\left(\Pi_a^{-1}\Pi_b\opV(nn-1) \text{\noindent
\(\mathds{1}\)}_{nn-1} \ot  |\phi_i(\alpha,r)\>\<\phi_j(\alpha,r)|\right)=\<\phi_i(\alpha,r)|\opX(\chi_{ab})|\phi_j(\alpha,r)\>
\end{split}
\ee
\end{proof}

It is worth to say a few words about properties of  operator $\opX(\chi_{ab})$ for different relations between indices $a$ and $b$. 

\begin{replemma}{Lemmaprop}
\lemmaprop
\end{replemma}

\begin{proof}
Let us define permutation $\eta_{ab}:= (nn-1)\circ \pi_a^{-1}\circ \pi_b$. Now our goal is to find partial trace over two last subsystems of $\opX(\eta_{ab})$. Let us assume here that $a\neq b$ then we deal with permutation operator $\opV(\eta_{ab})$. To calculate partial trace over some permutation operator we have to decompose permutation into disjoint cycles and then to obtain correct result we have to simply discard subsystems on which trace is calculated. To do this we have to decompose permutation $\eta_{ab}$ into disjoint cycles  and find cycle which contains numbers $n-1$ and $n$. First let us find  $x\in \mathbb{N}$ such that $\eta_{ab}[x]=n$, so $x=\eta_{ab}^{-1}[n]$, then
\be
x=\pi^{-1}_b\circ \pi_a\circ(nn-1)[n]=\pi^{-1}_b\circ \pi_a[n-1]=\pi^{-1}_b[a],  \ \text{and} \  x=n-1, \quad \text{when} \ a=b.
\ee
  After decomposition permutation $\eta_{ab}$ into disjoint cycles we have
\begin{equation}
\eta_{ab} =\eta'_{ab}\circ (xnn-1)\qquad  a \neq b,
\end{equation}
where  $\eta'_{ab}$  is some combination of disjoint cycles which do not contain $n$ and $n-1$. Now let us lift permutations into permutation operators, then partial trace over $\opV(\eta_{ab})$ is equal to $\opV(\chi_{ab})=\opV(\eta'_{ab})$. Now, since $\opV(\eta_{ab})=\opV(\eta_{ab}')\opV(xnn-1)$ we get $\opV(\eta_{ab}')=\opV(\eta_{ab})\opV(n-1nx)$. Finally partial trace from operator $\opV(\eta_{ab})$ for the case $a=b$ is equal to:
\begin{equation}
\opV(\chi_{ab})=\tr_{n,n-1}\opV(\eta_{ab})= 
\opV(\eta_{ab})\opV(n-1n\pi^{-1}_b[a]).
\end{equation}
 One can see that for $a=b$ situation is much more easier, because partial trace is simply equal to $d \cdot \text{\noindent\(\mathds{1}\)}_{1\ldots n-2}$, so we obtain required formula for both cases $a=b$ and $a\neq b$.
At the end we show for the case when $a\neq b$  any $\chi_{ab}$ is permutation at most from $S(n-2)$ :
\be
\begin{split}
&\chi_{ab}[n]=(nn-1)\circ \pi_a^{-1}\circ \pi_b\left[\pi_b^{-1}[a]\right]=(nn-1)\circ \pi_a^{-1}[a]=(nn-1)[n-1]=n\\
&\chi_{ab}[n-1]=(nn-1)\circ \pi_a^{-1}\circ \pi_b[n]=(nn-1)[n]=n-1.
\end{split}
\ee
\end{proof}

\begin{repdefinition}{Definitv0}
\definitv0
\end{repdefinition}

\begin{replemma}{Lemmavij}
\lemmavij
\end{replemma}

\begin{proof}
Using Definition~\ref{Definitv0} and Lemma~\ref{Lemmailsk} we can prove statement of this lemma  by direct calculations:
\be
\begin{split}
v_{ij}^{ab}(\alpha)v_{kl}^{cd}(\beta)&=\sum_r\sum_{r'}|\psi_i^a(\alpha,r')\>\<\psi_j^b(\alpha,r')|\psi_k^c(\beta,r)\>\<\psi_l^d(\beta,r)|=\\
&=d^{\delta_{bc}}\sum_r |\psi_i^a(\beta,r)\>\<\psi_l^d(\beta,r)|\<\phi_j(\beta,r)|\chi_{bc}|\phi_k(\beta,r)\>=\\
&=d^{\delta_{bc}}\varphi_{jk}^{\beta}(\chi_{bc})v_{il}^{ad}(\beta).
\end{split}
\ee
\end{proof}
Let us now lift embedding $S(n-2)$ into $S(n)$.
\begin{lemma}
\label{decomp1}
Partially transposed permutation operator $V'(\sigma_{ab})$ can be decompose as follows:
\be
\opV'(\sigma_{ab})=\Pi_b\opV(f_{ab}^{-1}(\sigma_{ab})) \opV'(nn-1)\Pi_a^{-1},
\ee
where $f_{ab}^{-1}(\sigma_{ab})=\pi_b^{-1}\circ \sigma_{ab}\circ \pi_a\circ (nn-1)$ and $f_{ab}^{-1}(\sigma_{ab})\in S(n-2)$.
\end{lemma}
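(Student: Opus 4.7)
The plan is to lift the set-theoretic decomposition of $\sigma_{ab}$ given by the map $f_{ab}$ to a decomposition at the level of permutation operators, and then observe that almost every factor commutes with the partial transpose on the last subsystem.

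\textbf{Step 1: Invert $f_{ab}$.} By Notation~\ref{not1} and the definition of $f_{ab}$ in~\eqref{eq:maps}, write $\sigma = f_{ab}^{-1}(\sigma_{ab}) \in S(n-2)$, so that
\[
\sigma_{ab} = \pi_b \circ \sigma \circ (n\,n{-}1) \circ \pi_a^{-1}.
\]
A direct check (substitute and use $(n\,n{-}1)^2 = e$) confirms that $\pi_b^{-1} \circ \sigma_{ab} \circ \pi_a \circ (n\,n{-}1) = \sigma$, consistent with the formula for $f_{ab}^{-1}$.

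\textbf{Step 2: Lift to permutation operators.} Since the representation $\opV$ is multiplicative, this factorization of $\sigma_{ab}$ yields
\[
\opV(\sigma_{ab}) \;=\; \Pi_b \, \opV(\sigma) \, \opV(n\,n{-}1) \, \Pi_a^{-1}.
\]

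\textbf{Step 3: Apply partial transpose on the last subsystem.} The key observation is that $\pi_a, \pi_b \in S(n-1)$ act only on the first $n-1$ tensor factors, and $\sigma \in S(n-2)$ acts only on the first $n-2$ tensor factors; in particular, all of $\Pi_a$, $\Pi_b$, and $\opV(\sigma)$ factor as $(\cdot) \otimes \id_n$ on $\mathcal{H}^{\otimes n}$. Since partial transposition on the $n$-th subsystem is a linear map of the form $\id_{1\ldots n-1} \otimes t$, it commutes with any operator of the form $A \otimes \id_n$. Hence, applying the partial transpose $'=t_n$ to the product in Step 2 leaves $\Pi_b$, $\opV(\sigma)$, and $\Pi_a^{-1}$ unchanged, and converts the single factor $\opV(n\,n{-}1)$ (which does act nontrivially on subsystem $n$) into $\opV'(n\,n{-}1)$. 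This yields exactly
\[
\opV'(\sigma_{ab}) \;=\; \Pi_b \, \opV(f_{ab}^{-1}(\sigma_{ab})) \, \opV'(n\,n{-}1) \, \Pi_a^{-1},
\]
as claimed.

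The only conceptual point that needs care is justifying that $t_n$ commutes with operators acting trivially on the $n$-th site; this is routine, but it is the crux of the argument, since it is what allows the partial transpose to isolate itself on the transposition $(n\,n{-}1)$. Everything else is bookkeeping with the definition of $f_{ab}$.
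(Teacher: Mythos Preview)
Your proof is correct and follows essentially the same approach as the paper: both rely on the fact that operators acting trivially on the $n$-th tensor factor (i.e., $\Pi_a$, $\Pi_b$, and $\opV(\sigma)$ with $\sigma\in S(n-2)$) are unaffected by, and commute through, the partial transpose $t_n$, so that the prime lands only on $\opV(n\,n{-}1)$. The paper runs the computation in the reverse direction (starting from the right-hand side and collapsing to $\opV'(\sigma_{ab})$), but the content is identical. One small point: the lemma also asserts $f_{ab}^{-1}(\sigma_{ab})\in S(n-2)$, and the paper verifies this explicitly by checking that $\pi_b^{-1}\circ\sigma_{ab}\circ\pi_a\circ(n\,n{-}1)$ fixes both $n$ and $n-1$; you instead invoke the definition of $f_{ab}$ and its invertibility from~\eqref{eq:maps}, which is acceptable given the earlier discussion around~\eqref{division}, but you may want to spell it out.
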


\begin{proof}
In proof of this lemma we use fact that that permutations from $S(n-1)$ are invariant under action
of partial transposition $'$ over last subsystem. Thanks to this we can write:
\be
\begin{split}
&\Pi_b \opV(f_{ab}^{-1}(\sigma_{ab})) \opV'(nn-1) \Pi_a^{-1}=\opV'(\pi_b \circ f_{ab}^{-1}(\sigma_{ab})\circ (nn-1)\circ \pi_a^{-1})=\\
&=\opV'(\pi_b \circ (\pi_b^{-1}\circ \sigma_{ab}\circ \pi_a\circ (nn-1)) \circ (nn-1) \circ \pi_a^{-1})=\opV'(\sigma_{ab}).
\end{split}
\ee
We can see that $\sigma=\pi_b^{-1}\circ \sigma_{ab}\circ \pi_a\circ (nn-1)$, then using definition of function $f_{ab}(\sigma)$ from equation~\eqref{eq:maps} we have
\be
f_{ab}(\sigma)=f_{ab}(\pi_b^{-1}\circ \sigma_{ab}\circ \pi_a\circ (nn-1))=\sigma_{ab},
\ee
so indeed $\pi_b^{-1}\circ \sigma_{ab}\circ \pi_a\circ (nn-1)=f_{ab}^{-1}(\sigma_{ab})$.
At the end one can see that  $f_{ab}^{-1}(\sigma_{ab})\in S(n-2)$, because $f_{ab}^{-1}(\sigma_{ab})[n]=n$ and $f_{ab}^{-1}(\sigma_{ab})[n-1]=n-1$. This finishes the proof.
\end{proof}
Now we are ready to formulate main result of this paper, namely we have following:

\begin{reptheorem}{ccomp1}
\theoremForVe
\end{reptheorem}

\begin{proof}
To prove this theorem let us recall formulas for algebra $\mathbb{C}[S(n-2)]$ from equation~\eqref{sim2}:
\be
\opE^{\alpha}_{ij}=\frac{d_{\alpha}}{(n-2)!}\sum_{\sigma \in S(n-2)}\varphi_{ij}^{\alpha}(\sigma)\opV(\sigma)\in \mathbb{C}[S(n-2)], \quad  \opV(\sigma)=\bigoplus_{\alpha}\sum_{ij}\varphi^{\alpha}_{ij}(\sigma)\opE_{ij}^{\alpha}\in \mathbb{C}[S(n-2)].
\ee
Applying mapping $\mathcal{F}_{ab}^t$ from Definition~\ref{Fab} to both sides of the above equalities we obtain respectively:
\be
v_{ij}^{ab}(\alpha)=\frac{d_{\alpha}}{(n-2)!}\sum_{\sigma\in S(n-2)}\varphi_{ij}^{\alpha}(\sigma)\opV'(f_{ba}(\sigma))\in M,\quad \opV'(\sigma_{ab})=\bigoplus_{\alpha}\sum_{ij} \varphi_{ij}^{\alpha}(f_{ab}^{-1}(\sigma_{ab}))v_{ij}^{ba}(\alpha)\in M
\ee
This finishes the proof.
\end{proof}

\begin{replemma}{actionOnvij}
\ActionOnvij
\end{replemma}

\begin{proof}
First let us consider elements from algebra $\mathcal{M}$. Using statement from Theorem~\ref{ccomp1} and composition law of operators $v_{ij}^{ab}(\alpha)$ from
Lemma~\ref{Lemmavij} we obtain:
\be
 \begin{split}
 \opV'(\sigma_{ab})v_{kl}^{cd}(\beta)&=\bigoplus_{\alpha}\sum_{ij} \varphi^{\alpha}_{ij}(f_{ab}^{-1}(\sigma_{ab}))v_{ij}^{ba}(\alpha)v_{kl}^{cd}(\beta)=d^{\delta_{ac}}\sum_{ij}\varphi_{ij}^{\alpha}(f_{ab}^{-1}(\sigma_{ab}))\varphi_{jk}^{\alpha}(\chi_{ac})v_{il}^{bd}(\alpha)=\\
 &=d^{\delta_{ac}}\sum_i\varphi_{ik}^{\alpha}\left(f_{ab}^{-1}(\sigma_{ab})\circ \chi_{ac}\right)v_{il}^{bd}(\alpha).
 \end{split}
\ee
 Now we prove thesis for elements from $\mathcal{N}$. Using Definition~\ref{Fab} of $\mathcal{F}_{ab}^t$ we have
 \be
 \begin{split}
 \opV(\sigma)v_{ij}^{ab}(\alpha)&=\opV(\sigma)\mathcal{F}_{ba}^t(\opE_{ij}^{\alpha})=\opV(\sigma)\Pi_a\opE_{ij}^{\alpha}\opV'(nn-1)\Pi_b^{-1}=\Pi_{\sigma[a]}\opV(\xi)\opE_{ij}^{\alpha}\opV'(nn-1)\Pi_b^{-1}=\\
 &=\Pi_{\sigma[a]}\sum_{k}\varphi_{ki}^{\alpha}(\xi)\opE_{kj}^{\alpha}\opV'(nn-1)\Pi_b^{-1}=\sum_k \varphi^{\alpha}_{ki}(\xi)\mathcal{F}_{b\sigma[a]}^t (\opE_{kj}^{\alpha})=\sum_k\varphi_{ki}^{\alpha}(\xi)v_{kj}^{\sigma[a]b}.
 \end{split}
 \ee
 Because of condition $\sigma \circ \pi_a = \pi_{\sigma[a]}\circ \xi$ we get $\xi=\pi_{\sigma[a]}^{-1}\circ \sigma \circ \pi_a \equiv f_a(\sigma)\in S(n-2)$.
This finishes the proof.
\end{proof}
Our next step is to construct operator basis which is orthonormal in Hilbert-Schmidt. First we prove the following theorem which connects invertibility of matrix $Q_{ij}^{ab}(\alpha)$ from Definition~\ref{Q} with plays crucial role in our further construction.

\begin{reptheorem}{theoremForMatrix}
\theoremForMatrix
\end{reptheorem}

\begin{proof}
From the assumption we have $\varphi ^{\alpha }(ij)=\varphi ^{\alpha
}(ij)^{-1}=\varphi ^{\alpha }(ij)^{+}$ so the matrices $\varphi ^{\alpha
}(ij)$ are unitary and hermitean. From this we get 
\[
||\varphi ^{\alpha }(ij)||\equiv \sqrt{\rho (\varphi ^{\alpha }(ij)\varphi
^{\alpha }(ij)^{+})}=1,
\]%
where $\rho (A)$ is the spectral radius of the matrix $A.$ Now we have to
show that 
\[
\forall x^{+}=(x_{1}^{+},x_{2}^{+},...,x_{n-1}^{+})\in 
\mathbb{C}
^{(n-1)w^{\alpha }}\qquad x^{+}Q(\alpha )x>0. 
\]%
Using the explicite block structure of the matrix $Q(\alpha )$ we get 
\[
x^{+}Q(\alpha
)x=\sum_{i=1}^{n-1}dx_{i}^{+}x_{i}+\sum_{i=1}^{n-2}x_{i}^{+}x_{n-1}+%
\sum_{i=1}^{n-2}x_{n-1}^{+}x_{i}+\sum_{i\neq j=1}^{n-2}x_{i}^{+}\varphi
^{\alpha }(ij)x_{j}. 
\]%
It may be written%
\[
x^{+}Q(\alpha
)x=d\sum_{i=1}^{n-1}||x_{i}||^{2}+%
\sum_{i=1}^{n-2}[(x_{i},x_{n-1})+(x_{n-1},x_{i})]+\sum_{i<j}^{n-2}[(x_{i},%
\varphi ^{\alpha }(ij)x_{j})+(x_{j},\varphi ^{\alpha }(ij)x_{i})]. 
\]%
where $(x,y)\equiv x^{+}y$ for $x,y\in 
\mathbb{C}
^{w^{\alpha }}.$ Using the Schwartz inequality and the property $||\varphi
^{\alpha }(ij)||=1$ we get 
\[
|\sum_{i=1}^{n-2}[(x_{i},x_{n-1})+(x_{n-1},x_{i})]|+|%
\sum_{i<j}^{n-2}[(x_{i},\varphi ^{\alpha }(ij)x_{j})+(x_{j},\varphi ^{\alpha
}(ij)x_{i})]|\leq 2\sum_{i<j}^{n-1}||x_{i}||||x_{j}|| 
\]%
and from the elementary inequality%
\[
2ab\leq a^{2}+b^{2} 
\]%
one obtain 
\[
2\sum_{i<j=1}^{n-1}||x_{i}||||x_{j}||\leq
(n-2)\sum_{i=1}^{n-1}||x_{i}||^{2}. 
\]%
Now the application of the inequality 
\[
\forall a\in 
\mathbb{R}
\quad a\geq -|a| 
\]%
gives the inequality 
\[
x^{+}Q(\alpha )x\geq (d-n+2)\sum_{i=1}^{n-1}||x_{i}||^{2}. 
\]
\end{proof}
Using above considerations let us formulate following

\begin{repdefinition}{Definitdeff0}
\definitdeff0
\end{repdefinition}

Now we write all crucial properties of our new set of operators $\{\omega_{ij}^{ab}(\alpha)\}$, such as rule of composition, action on elements from algebras $\mathcal{M}$ and $\mathbb{C}[S(n-1)]$, rewrite Theorem~\ref{ccomp1} and finally we calculate matrix elements of irreps.
\begin{replemma}{CompOfOmegas}
\compOfOmegas
\end{replemma}

\begin{proof}
Using exactly Lemma~\ref{aux2} from Appendix~\ref{opBases3} we get that the transformation rules are the same as for operators $v_{ij}^{ab}(\alpha)$ in Lemma~\ref{Lemmavij}.
\end{proof}

\begin{replemma}{actionOnF}
\ActionOnF
\end{replemma}

\begin{proof}
Our proof we start for elements from $\mathcal{M}$. We know that operators $\omega_{ij}^{ab}(\alpha)$ are given by some linear transformation of operators $v_{ij}^{ab}(\alpha)$ (see Definition~\ref{Definitdeff0}), so using exactly Lemma~\ref{aux} from Appendix~\ref{opBases3} and multiindeces we obtain required rule of transformation, which is the same like for operators $v_{ij}^{ab}(\alpha)$.
To prove statement of lemma for elements from $\mathbb{C}[S(n-1)]$ we use conclusions from  Lemma~\ref{actionOnvij} and Definition~\ref{Definitdeff0}. Namely we have:
\be
\begin{split}
\opV(\sigma)\omega_{ij}^{ab}(\alpha)&=\sum_{mn}D_{jn}^{bm}(\alpha)\opV(\sigma)v_{in}^{am}(\alpha)=\sum_{mn}D_{jn}^{bm}(\alpha)\sum_k \varphi^{\alpha}_{ki}(f_a(\sigma))v_{kn}^{\sigma[a]m}=
\sum_k \varphi^{\alpha}_{ki}(f_a(\sigma)) \omega_{kj}^{\sigma[a]b}.
\end{split}
\ee
This finishes the proof. 
\end{proof}
To rewrite Theorem~\ref{ccomp1}
in terms of new operators $\omega_{ij}^{ab}(\alpha)$ we need to know how to represent operators $v_{ij}^{ab}(\alpha)$ like a function of $\omega_{ij}^{ab}(\alpha)$. This is given in the following
\begin{fact}
Operators $v_{ij}^{ab}(\alpha)$ can be written in terms of $\omega_{ij}^{ab}(\alpha)$ as follows:
\be
\label{invers}
v_{ij}^{ab}(\alpha)=\sum_{kc}\varphi_{jk}^{\alpha}(\chi_{bc})\omega_{ik}^{ac}(\alpha).
\ee
\end{fact}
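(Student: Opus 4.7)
The plan is to prove the Fact by direct matrix inversion of the change of basis given in Definition~\ref{Definitdeff0}. Under the standing hypothesis $d > n-2$ of Theorem~\ref{theoremForMatrix}, the Gram matrix $Q(\alpha)$ is strictly positive and hence invertible, so $D(\alpha)=Q(\alpha)^{-1}$ genuinely exists and the inversion is unambiguous.

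To implement the inversion, I would view $Q$ and $D$ as block matrices of size $(n-1)m_\alpha\times(n-1)m_\alpha$ with composite row index $(a,i)$ and composite column index $(b,j)$, so that $Q_{ij}^{ab}(\alpha)=Q\big[(a,i),(b,j)\big]$ and similarly for $D$. With this reading, Definition~\ref{Definitdeff0} rewrites as
\be
\omega_{ij}^{ab}(\alpha)=\sum_{(k,c)} D\big[(b,j),(k,c)\big]\, v_{ic}^{ak}(\alpha),
\ee
i.e., the pair $(a,i)$ is a passive spectator while $D$ acts on the $(k,c)$ index of $v$ to produce the $(b,j)$ index of $\omega$. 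I would then multiply both sides by $Q\big[(d,l),(b,j)\big]$ and sum over $(b,j)$, using the identity $QD=DQ=I$ in the form
\be
\sum_{(b,j)} Q\big[(d,l),(b,j)\big]\, D\big[(b,j),(k,c)\big]=\delta_{dk}\delta_{lc},
\ee
to collapse the right-hand side to $v_{il}^{ad}(\alpha)$. Substituting the explicit matrix entries $Q_{lj}^{db}(\alpha)=d^{\delta_{db}}\varphi_{lj}^\alpha(\chi_{db})$ (with the convention $\chi_{dd}=e$ from Definition~\ref{Q}) and relabeling the dummy indices yields the claimed formula~\eqref{invers}.

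The step that requires the most care is the index bookkeeping: the placement of superscripts in $D_{jc}^{bk}$ and $Q_{ij}^{ab}$ does not line up positionally, and one has to verify that the sum in Definition~\ref{Definitdeff0} is literally the matrix product $\omega=D\cdot v$ (acting on the spectator-free index pair) rather than some transposed variant, so that the inversion really corresponds to left multiplication by $Q$. A useful sanity check, which I would carry out as a cross-verification, is to substitute the tentative formula for $v_{ij}^{ab}(\alpha)$ into the composition law of Lemma~\ref{CompOfOmegas} and confirm that one recovers the $v$-composition law of Lemma~\ref{Lemmavij}; both directions ultimately reduce to the same identity $QD=I$, which makes the argument self-consistent and essentially closes it.
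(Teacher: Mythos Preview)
Your approach is correct and is essentially the paper's own argument: both reduce to the single identity $QD=I$, the only cosmetic difference being that the paper verifies~\eqref{invers} by substituting Definition~\ref{Definitdeff0} into its right-hand side and collapsing the sum, whereas you derive it by left-multiplying the defining relation by $Q$ and then reading off the entries. One caution: if you execute your plan literally with $Q_{lj}^{db}(\alpha)=d^{\delta_{db}}\varphi_{lj}^{\alpha}(\chi_{db})$ you will produce an extra factor $d^{\delta_{bc}}$ in front of $\varphi_{jk}^{\alpha}(\chi_{bc})$ that does not appear in~\eqref{invers}; the paper's own verification silently drops this same factor, so do not be alarmed by the discrepancy.
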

\begin{proof}
 Let us put equation~\eqref{deff0} from Definition~\ref{Definitdeff0} directly to equation~\eqref{invers}:
\be
\sum_{kc}\varphi_{jk}^{\alpha}(\chi_{bc})\omega_{ik}^{ac}(\alpha)=\sum_{kc}\sum_{mn} \varphi_{jk}^{\alpha}(\chi_{bc})D_{kn}^{cm}(\alpha)v_{in}^{am}(\alpha)=\sum_{mn}\delta^{bm}\delta_{jn}v_{in}^{am}(\alpha)=v_{ij}^{ab}(\alpha).
\ee
We have identity, so proof is finished.
\end{proof}

\begin{reptheorem}{thForOmegas}
\theoremForOmegas
\end{reptheorem}

\begin{proof}
To prove first part of theorem we put equation~\eqref{eq:Comp1} into Definition~\ref{Definitdeff0} of operators $\omega_{ij}^{ab}(\alpha)$:
\be
\begin{split}
\omega_{ij}^{ab}(\alpha)&=\sum_{kc}D_{jk}^{bc}(\alpha)v_{ik}^{ac}(\alpha)=\frac{d_{\alpha}}{(n-2)!}\sum_{kc}\sum_{\sigma\in S(n-2)} D_{jk}^{bc}(\alpha)\varphi^{\alpha}_{ik}(\sigma)\opV'(f_{ca}(\sigma))=\\
&=\frac{d_{\alpha}}{(n-2)!}\sum_{kc}\sum_{\sigma\in S(n-2)}\varphi_{ik}^{\alpha}(\sigma)\varphi_{kj}^{\alpha}(\chi_{bc}^{-1})\opV'(f_{ca}(\sigma))=\frac{d_{\alpha}}{(n-2)!}\sum_{c}\sum_{\sigma\in S(n-2)}\varphi_{ij}^{\alpha}(\sigma \circ \chi_{bc}^{-1})\opV'(f_{ca}(\sigma)).
\end{split}
\ee
To prove second part we put directly equation~\eqref{invers} to equation~\eqref{eq:decomp2} from Theorem~\ref{ccomp1}
\be
\opV'(\sigma_{ab})=\bigoplus_{\alpha}\sum_{ij}\varphi_{ij}^{\alpha}(f_{ab}^{-1}(\sigma_{ab}))\sum_{kc}\varphi_{jk}^{\alpha}(\chi_{ac})\omega_{ik}^{bc}(\alpha)=\bigoplus_{\alpha}\sum_{ikc}\varphi_{ik}^{\alpha}(f_{ab}^{-1}(\sigma_{ab})\circ \chi_{ac})\omega_{ik}^{bc}(\alpha).
\ee
Making the same argumentation like in proof of Theorem~\ref{ccomp1} we see that $\omega_{ij}^{ab}(\alpha)\in \mathcal{M}$. This finishes the proof.
\end{proof}

\begin{replemma}{ElementsIrreps}
\irreps
\end{replemma}

\begin{proof}
In this proof we use explanations from Appendix~\ref{A3} for special case, i.e. for $\opV'(\sigma_{ab})$ and $\opV(\sigma)$. Thus, for this case using Lemma~\ref{actionOnF} we have
\be
\begin{split}
[\widetilde{\opV}'_{\alpha}(\sigma_{ab})]_{dl,ck}&=\frac{1}{m_{\alpha}}\tr[\opV'(\sigma_{ab})\omega_{kl}^{cd}(\alpha)]=\frac{d^{\delta_{ac}}}{m_{\alpha}}\sum_i \varphi_{ik}^{\alpha}\left( f^{-1}_{ab}(\sigma_{ab})\circ \chi_{ac}\right)\tr[\omega_{il}^{bd}(\alpha )]=\\
&=d^{\delta_{ac}}\sum_i \varphi_{ik}^{\alpha}\left( f^{-1}_{ab}(\sigma_{ab})\circ \chi_{ac}\right)\delta_{bd} \delta_{il}=d^{\delta_{ac}}\varphi_{lk}^{\alpha}(f^{-1}_{ab}(\sigma_{ab})\circ \chi_{ac})\delta_{bd}.\\
[\widetilde{\opV}_{\alpha}(\sigma)]_{dl,ck}&=\frac{1}{m_{\alpha}}\tr[\opV'(\sigma)\omega_{kl}^{cd}(\alpha)]=\frac{1}{m_{\alpha}}\sum_i\varphi_{ik}^{\alpha}(f_a(\sigma ))\tr[\omega_{il}^{\sigma[c]d}]=\\
&=\sum_i \varphi_{ik}^{\alpha}(f_a(\sigma ))\delta_{\sigma[c]d}\delta_{il}=\varphi_{lk}^{\alpha}(f_a(\sigma ))\delta_{\sigma[c]d}.
\end{split}
\ee
In above calculations we used condition of biorthogonality which gives us that $\tr[\omega_{ij}^{ab}(\alpha)]=\frac{1}{m_{\alpha}}\delta_{ab}\delta_{ij}$, where number $m_{\alpha}$ is multiplicity of irrep $\alpha$.
\end{proof}

\subsection{Operator bases}
\label{opBases}
In this section we present short discussion how to construct biorthogonal basis from nonorthogonal one. This exactly illustrate our method from the Subsections~\ref{nonort} and~\ref{bioort} of this paper, but here we do it for an arbitrary basis, so this considerations are valid for general case. Here operator $\w_{ij}$ will be prototypes of operators $v_{ij}^{ab}(\alpha)$ from Definition~\ref{Definitv0} and operators $\ww_{ij}$ will play role of operators $\omega_{ij}^{ab}(\alpha)$ from Definition~\ref{Definitdeff0}.\\
Given a Hilbert space $\hcal$  consider a set  $S_M=\{\psi_i\}_{i=1}^k$ of vectors $\psi_i\in \hcal$ (possibly nonnormalised, and nonorthonormal). Let us construct the following operators that are poor versions of
operators $|i\>\<j|$, for an orthonormal set $\{|i\>\}$ of vectors:
\be
\label{eq:o1}
\w_{ij}=|\psi_i\>\<\psi_j|.
\ee
The operators $\w_{ij}$ satisfy following composition rule:
\be
\label{eq:compw}
\w_{ij}\w_{kl}=|\psi_i\>\<\psi_j|\psi_k\>\<\psi_l|= G_{jk}\w_{il},
\ee
where $G_{jk}$ is matrix element of Gram matrix $G$ for vectors $\{|\psi_i\>\}$:
\be
G_{jk}=\<\psi_j|\psi_k\>.
\ee 
Since the set of vectors $\psi_i$ is linearly independent, the Gram matrix is nonsingular,
hence there exists inverse matrix, which we will denote by $D$.
Using this matrix we can now define new operators, which have property of composition,
the same as the mentioned operators $|i\>\<j|$. 
\be
\label{eq:e}
\ww_{ij}=\sum_k D_{jk} \w_{ik}=|\psi_i\>\<\phi_j|,
\ee
where $D_{jk}$ is matrix element of matrix $D = G^{-1}$ and $|\phi_j\>=\sum_k D_{kj}|\psi_k\>$.
One easily finds, that $\ww_{ij}$ satisfy the required composition rule:
\be
\ww_{ij} \ww_{kl} = \delta_{jk} \ww_{il}
\ee
Indeed, using composition rule for operators $\w_{ij}$ from equation~\eqref{eq:compw} 
we get 
\be
\ww_{ij}\ww_{kl}=\sum_{nm} D_{jn} D_{lm} \w_{in}\w_{km}=\sum_{n}D_{jn}G_{nk} \sum_m D_{lm}\w_{im}=\delta_{jk}\ww_{il}.
\ee
We have also the following inverse relations
\be
\label{eq:inv1}
\w_{ij}=\sum_k G_{jk}\ww_{ik}.
\ee
Indeed if we put equation~\eqref{eq:e} into right hand side  of formula~\eqref{eq:inv1} we obtain
\be
\sum_k G_{jk} \sum_l D_{kl}\w_{il}=\sum_l \delta_{jl}\w_{il}=\w_{ij}.
\ee

\subsection{From $\w_{ij}$ to $\ww_{ij}$}
\label{opBases2}
In this section we illustrate how to rewrite formulas for operators which are linear combination of operators $\w_{ij}$ in terms of operators $\ww_{ij}$. It is worth to say here that in our work we mostly consider operators of the form $\opE_{ij}=\text{\noindent\(\mathds{1}\)}\ot |i\>\<j|$, but further formulas are still valid for this case. 
In this section $g$ is an arbitrary label and the operators  $\opX(g)$ are prototypes of permutation operators $\opV'(\sigma)$ and then $g$ will be from $S(n)$.\\
Consider now set of operators $\opX(g)$ acting on our Hilbert space $\hcal$,
and suppose that we are given the relations between this set and operators $\w_{ij}$:
\be
\label{eq:f1}
\opX(g):=\sum_{ij}a_{ij}(g)\w_{ij}
\ee
and the inverse $\opX(g)$:
\be
\label{eq:f2}
\w_{ij}=\sum_{g} b_{ij}(g)\opX(g),
\ee
Now our goal is rewrite equation~\eqref{eq:f1} and~\eqref{eq:f2} in terms of orthogonal operators $\ww_{ij}$. 

Now we are ready to prove following
\begin{fact}
For operators $X(g)$ satisfying formulas \eqref{eq:f1} and  \eqref{eq:f2}
we have the following relations:
\be
\opX(g)=\sum_{ik}\left(A(g)G\right)_{ik}\ww_{ik},
\ee
and the inverse relations
\be
\ww_{ij}=\sum_{g} \left(B(g)\bar{G}^{-1}\right)_{ij}\opX(g).
\ee
\end{fact}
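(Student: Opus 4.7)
The plan is a direct substitution argument using the change-of-basis identities derived in the previous subsection: $\ww_{ij}=\sum_k D_{jk}\w_{ik}$ (see~\eqref{eq:e}) and its inverse $\w_{ij}=\sum_k G_{jk}\ww_{ik}$ (see~\eqref{eq:inv1}), where $D=G^{-1}$.

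First, for the relation $\opX(g)=\sum_{ik}(A(g)G)_{ik}\ww_{ik}$, I would substitute~\eqref{eq:inv1} directly into the expansion~\eqref{eq:f1}, obtaining
\be
\opX(g)=\sum_{ij}a_{ij}(g)\w_{ij}=\sum_{ik}\Bigl(\sum_j a_{ij}(g)G_{jk}\Bigr)\ww_{ik},
\ee
and then reading off the inner sum as the $(i,k)$-entry of the matrix product $A(g)G$.

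Second, for the inverse relation, I would plug~\eqref{eq:f2} into the definition~\eqref{eq:e}:
\be
\ww_{ij}=\sum_k D_{jk}\w_{ik}=\sum_g\Bigl(\sum_k b_{ik}(g)D_{jk}\Bigr)\opX(g),
\ee
and then identify $\sum_k b_{ik}(g)D_{jk}=(B(g)\bar G^{-1})_{ij}$. The mild subtlety here is the transpose/conjugate bookkeeping: since $G_{jk}=\<\psi_j|\psi_k\>$ is Hermitian, $G^{\dagger}=G$ gives $\bar G=G^T$ and hence $\bar G^{-1}=(G^{-1})^T$, so $D_{jk}=(G^{-1})_{jk}=(\bar G^{-1})_{kj}$. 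With this, $\sum_k b_{ik}(g)D_{jk}=\sum_k B(g)_{ik}(\bar G^{-1})_{kj}=(B(g)\bar G^{-1})_{ij}$, as required.

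No real obstacle is present here; the only point demanding any care is the identification of $D$ with the transpose of $\bar G^{-1}$, which rests entirely on Hermiticity of the Gram matrix. Everything else is bookkeeping with the matrices $A(g)$, $B(g)$ and $G$ already produced in~\eqref{eq:f1}--\eqref{eq:f2} and~\eqref{eq:inv1}.
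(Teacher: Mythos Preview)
Your proposal is correct and follows essentially the same approach as the paper: substitute \eqref{eq:inv1} into \eqref{eq:f1} for the first identity and \eqref{eq:f2} into \eqref{eq:e} for the second. Your explicit justification that $D_{jk}=(\bar G^{-1})_{kj}$ via Hermiticity of the Gram matrix is a bit more careful than the paper, which simply writes down the final expression $\left(B(g)\bar G^{-1}\right)_{ij}$ without spelling out that step.
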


\begin{proof}
Putting equation~\eqref{eq:inv1} into formula~\eqref{eq:f1} we get
\be
\opX(g)=\sum_{ij}a_{ij}(g)\w_{ij}=\sum_{ij}a_{ij}(g)\sum_k G_{jk}\ww_{ik}=\sum_{ik}\left(\sum_j a_{ij}(g)G_{jk}\right)\ww_{ik}=\sum_{ik}\left(A(g)G\right)_{ik}\ww_{ik}.
\ee
To prove inverse relation we put equation~\eqref{eq:o1} into equation~\eqref{eq:e}
\be
\ww_{ij}=\sum_k D_{jk}\w_{ik}=\sum_k D_{jk}\sum_g b_{ik}(b) \opX(g)=\sum_g \left(B(g)\bar{G}^{-1}\right)_{ij}\opX(g)
\ee
\end{proof}

\subsection{Auxiliary Lemmas}
\label{opBases3}
We present here some set of auxiliary lemmas which are useful to prove some statements form Section~\ref{FullDescription}. In the lemmas below $X, Y_{kl}, Z_{kl}$ (for any fixed $k$ and $l$) 
are square matrices.
\label{A1}
\begin{lemma}
\label{aux}
Suppose that
\be
\operatorname{X}\operatorname{Y}_{kl}=\sum_m a_{km} \operatorname{Y}_{ml},
\ee
then if we define $\operatorname{Z}_{kl}=\sum_n b_{ln} \operatorname{Y}_{kn}$ we obtain:
\be
\operatorname{X}\operatorname{Z}_{kl}=\sum_j a_{kj} \operatorname{Z}_{jl},
\ee
\end{lemma}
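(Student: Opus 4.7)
The plan is a direct algebraic verification with no conceptual difficulty. I would start by expanding the left-hand side $\operatorname{X}\operatorname{Z}_{kl}$ using the defining formula for $\operatorname{Z}_{kl}$, namely $\operatorname{Z}_{kl}=\sum_n b_{ln}\operatorname{Y}_{kn}$. Because $\operatorname{X}$ acts linearly from the left and the $b_{ln}$ are scalars, they pull out of the product to give $\operatorname{X}\operatorname{Z}_{kl}=\sum_n b_{ln}\,\operatorname{X}\operatorname{Y}_{kn}$.

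The next step is to invoke the hypothesis $\operatorname{X}\operatorname{Y}_{kn}=\sum_j a_{kj}\operatorname{Y}_{jn}$, which is simply the assumed relation with the second index of $\operatorname{Y}$ specialised from $l$ to $n$ (the hypothesis is uniform in that second index). Substituting and then exchanging the two finite sums yields $\operatorname{X}\operatorname{Z}_{kl}=\sum_j a_{kj}\sum_n b_{ln}\operatorname{Y}_{jn}$. Finally I would recognise the inner $n$-sum as precisely the definition of $\operatorname{Z}_{jl}$, since the $l$-dependence is carried entirely by $b_{ln}$ and the index $j$ now plays the role of the first index of $\operatorname{Z}$. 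This gives $\operatorname{X}\operatorname{Z}_{kl}=\sum_j a_{kj}\operatorname{Z}_{jl}$, which is the claimed identity.

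There is essentially no obstacle here; the only care needed is index book-keeping, in particular keeping the two labels of $\operatorname{Y}$ distinct so as not to confuse the summation variable $n$ with the fixed label $l$, and noting that the second index of $\operatorname{Z}$ remains $l$ throughout because it never couples to $\operatorname{X}$. The lemma is purely auxiliary: its role is to let one transfer a left-action formula from the basis $\{\operatorname{Y}_{kl}\}$ to any basis obtained by right-mixing the second index, which is exactly the situation when passing from the operators $v_{ij}^{ab}(\alpha)$ to $\omega_{ij}^{ab}(\alpha)$ via the matrix $D=Q^{-1}$.
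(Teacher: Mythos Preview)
Your proof is correct and matches the paper's own argument essentially line for line: expand $\operatorname{Z}_{kl}$, pull $\operatorname{X}$ through the scalar sum, apply the hypothesis to $\operatorname{X}\operatorname{Y}_{kn}$, swap the finite sums, and recognise the inner sum as $\operatorname{Z}_{jl}$. The only difference is the name of the summation index ($j$ versus $m$).
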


\begin{proof}
By direct calculations:
\be
\begin{split}
\operatorname{X}\operatorname{Z}_{kl}&=\sum_n b_{ln} \operatorname{X}\operatorname{Y}_{kn}=\sum_n b_{ln} \sum_m a_{km} \operatorname{Y}_{mn}=
\sum_m a_{km} \left(\sum_n b_{ln} \operatorname{Y}_{mn} \right)=\sum_m a_{km} \operatorname{Z}_{ml}.
\end{split}
\ee
\end{proof}

\begin{lemma}
\label{aux2}
Suppose that 
\be
\opY_{ij}\opY_{kl}=a_{jk}\opY_{il}
\ee
then if we define $\opZ_{kl}=\sum_nb_{ln}\opY_{kn}$, where $\sum_k a_{ik}b_{kj}=\delta_{ij}$ we obtain:
\be
\opZ_{ij}\opZ_{kl}=\delta_{jk}\opZ_{il}.
\ee
\end{lemma}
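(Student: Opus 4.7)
The plan is to prove Lemma \ref{aux2} by a direct computation, substituting the definition of $\opZ_{ij}$ into the product $\opZ_{ij}\opZ_{kl}$, invoking the composition rule for $\opY_{ij}$, and collapsing the resulting double sum using the inversion identity $\sum_k a_{ik}b_{kj}=\delta_{ij}$. The structure mirrors the derivation of the composition rule for the $\ww_{ij}$ in Section~\ref{opBases}, where the Gram matrix plays the role of $a_{jk}$ and its inverse plays the role of $b_{ln}$; here we abstract from that special case.

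Concretely, I would first expand
\begin{equation*}
\opZ_{ij}\opZ_{kl}=\Bigl(\sum_m b_{jm}\opY_{im}\Bigr)\Bigl(\sum_n b_{ln}\opY_{kn}\Bigr)=\sum_{m,n}b_{jm}b_{ln}\,\opY_{im}\opY_{kn}.
\end{equation*}
Then I would apply the assumed composition rule $\opY_{im}\opY_{kn}=a_{mk}\opY_{in}$, which turns the right-hand side into
\begin{equation*}
\sum_{m,n} b_{jm}b_{ln}\,a_{mk}\,\opY_{in}=\sum_{n}b_{ln}\Bigl(\sum_m b_{jm}a_{mk}\Bigr)\opY_{in}.
\end{equation*}
The inner sum equals $(ba)_{jk}$, which by the hypothesis $ab=I$ (and finite dimension, so also $ba=I$) equals $\delta_{jk}$. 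Factoring this out yields $\delta_{jk}\sum_n b_{ln}\opY_{in}=\delta_{jk}\opZ_{il}$, which is the claim.

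There is essentially no genuine obstacle; the only delicate point is making sure that the hypothesis $\sum_k a_{ik}b_{kj}=\delta_{ij}$ (i.e.\ $ab=I$) suffices to conclude $ba=I$, so that the factor $\sum_m b_{jm}a_{mk}$ really collapses to $\delta_{jk}$. This is immediate in finite dimension, which is the setting of all matrices appearing in the paper (the Gram matrix $Q_{ij}^{ab}(\alpha)$ and its inverse $D_{jc}^{bk}(\alpha)$). With this observation the proof reduces to index bookkeeping and can be written in a few lines.
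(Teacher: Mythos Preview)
Your proof is correct and follows essentially the same direct computation as the paper's proof: expand, apply the $\opY$ composition rule, and collapse the inner sum via the inversion identity. You are in fact slightly more careful than the paper in flagging that the hypothesis gives $ab=I$ while the computation needs $ba=I$, which is justified in finite dimension; the paper's proof simply uses $\sum_n b_{jn}a_{nk}=\delta_{jk}$ without comment.
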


\begin{proof}
By direct calculations:
\be
\begin{split}
\opZ_{ij}\opZ_{kl}=\sum_n\sum_m b_{jn}\opY_{in} b_{lm}\opY_{km}=\sum_{n} b_{jn}a_{nk}\sum_m b_{lm}\opY_{im}=\delta_{jk}\opZ_{il}.
\end{split}
\ee
\end{proof}

\subsection{Matrix elements in biorthogonal basis}
\label{A3}
In this section we give explicit formulas for matrix elements of an arbitrary operator $X\in \mathcal{A}$ in biorthogonal basis. It is clear that such elements, let us say $X_{ij}$ are given by equation
\be
\label{mel}
X|\psi_i\>=\sum_j X_{ij}|\phi_j\>.
\ee
Since vectors $|\psi_i\>$ and $|\phi_j\>$ form an biorthogonal basis we get that $X_{ij}=\<\psi_i|X|\phi_j\>$.
Here we restrict to basis in given irrep $\alpha$, but of course results are valid in arbitrary biorthogonal system. We know that we can rewrite operators $X$ and $\omega_{ij}^{ab}(\alpha)$ like identity operator on multiplicity space and the "rest" on representation space for fixed irrep $\alpha$:
\be
X=\bigoplus_{\alpha}\text{\noindent
\(\mathds{1}\)}_{\alpha}\ot \widetilde{X}_{\alpha},\qquad \omega_{ij}^{ab}(\alpha)=\text{\noindent
\(\mathds{1}\)}_{\alpha}\ot |\phi^{\alpha}_{b,j}\>\<\psi^{\alpha}_{a,i}|,
\ee
where vectors $\{|\phi^{\alpha}_{b,j}\>, |\psi^{\alpha}_{a,i}\>\}$ form an biorthonormal basis in representation space. Hence we can write
\be
\tr[X\omega_{ij}^{ab}(\alpha)]=\tr[(\text{\noindent
\(\mathds{1}\)}_{\alpha}\ot \widetilde{X}_{\alpha})( \text{\noindent
\(\mathds{1}\)}_{\alpha}\ot |\phi^{\alpha}_{b,j}\>\<\psi^{\alpha}_{a,i}|)]=m_{\alpha}\<\psi^{\alpha}_{a,i}|\widetilde{X}_{\alpha}\phi^{\alpha}_{b,j}\>,
\ee
where $m_{\alpha}$ is multiplicity of irrep $\alpha$. Finally matrix elements are given by the formula
\be
X_{ai,bj}^{\alpha}=\<\psi^{\alpha}_{a,i}|\widetilde{X}_{\alpha}|\phi^{\alpha}_{b,j}\>=\frac{1}{m_{\alpha}}\tr[X\omega_{ij}^{ab}(\alpha)].
\ee
\subsection{Division into class $S_{ab}$ for $n=3,4,5$}
\label{class}
\begin{example}
\label{ex}
This example illustrates Notation~\ref{not1}. Let us consider two permutations. First one is the element of $S(3)$, second one is element of group $S(4)$.
\bigskip 
\[
\left( 
\begin{array}{ccc}
1 & 2 & 3 \\ 
2 & 3 & 1%
\end{array}%
\right) =\left( 
\begin{array}{ccc}
1 & 2 & 3 \\ 
2 & 3 & 1%
\end{array}%
\right) _{(2,1)},\qquad \left( 
\begin{array}{cccc}
1 & 2 & 3 & 4 \\ 
2 & 1 & 4 & 3%
\end{array}%
\right) =\left( 
\begin{array}{cccc}
1 & 2 & 3 & 4 \\ 
2 & 1 & 4 & 3%
\end{array}%
\right) _{(3,3)}. 
\]
Looking at these examples we see that in the first case we have $a=2, b=1$ and for the second permutation $a=3,b=3$.
\end{example}
To illustrate division into class $S_{ab}$ let us recall here equation~\eqref{division}:
\be
S(n)=\bigcup_{a,b=1}^{n-1}S_{ab}\cup S(n-1).
\ee
\subparagraph{Case $n=3$} In this case $1\leq a,b \leq 2$, so we have
\be
S_{11}=\{(13)\}, \ S_{12}=\{(123)\}, \ S_{21}=\{(132)\}, \ S_{22}=\{(23)\} \ \text{and} \ S(2)=\{ e,(12)\}.
\ee
\subparagraph{Case $n=4$} In this case $1\leq a,b \leq 3$, so we have
\be
\begin{split}
S_{11}&=\{(14),(14)(23)\}, \ S_{12}=\{(124),(1324)\}, \ S_{13}=\{(1234),(134)\},\\
S_{21}&=\{(142),(1423)\}, \ S_{22}=\{(24),(13)(24)\}, \ S_{23}=\{(234),(1342)\},\\
S_{31}&=\{(1432),(143)\}, \ S_{32}=\{(243),(1243)\}, \ S_{33}=\{(34),(12)(34)\}.
\end{split}
\ee
We  see that all classes possess two elements.
\subparagraph{Case $n=5$} In this case $1\leq a,b \leq 4$, so we have
\be
\begin{split}
S_{11}&=\{(15),(15)(34),(15)(23),(15)(234),(15)(243),(15)(24)\},\\
S_{12}&=\{(125),(125)(34),(1325),(13425),(14325),(1425)\},\\
S_{13}&=\{(12345),(12435),(135),(135)(24),(1435),(14235)\},\\
S_{14}&=\{(12345),(1245),(1345),(13245),(145),(145)(23)\},\\
S_{21}&=\{(152),(152)(34),(1523),(15234),(15243),(15241)\},\\
S_{22}&=\{(25),(25)(34),(13)(25),(134)(25),(143)(25),(14)(25)\},\\
S_{23}&=\{(235),(2435),(1352),(13524),(14352),(14)(235)\},\\
S_{24}&=\{(2345),(245),(13452),(13)(245),(1452),(14523)\},\\
S_{31}&=\{(1532),(15342),(153),(1534),(153)(24),(15324)\},\\
S_{32}&=\{(253),(2534),(1253),(12534),(14253),(14)(253)\},\\
S_{33}&=\{(35),(24)(35),(12)(35),(124)(35),(142)(35),(14)(35)\},\\
S_{34}&=\{(345),(2453),(12)(345),(12453),(14532),(14531)\},\\
S_{41}&=\{(15432),(15421),(1543),(154),(15423),(23)(154)\},\\
S_{42}&=\{(2543),(254),(12543),(1254),(13)(254),(13254)\},\\
S_{43}&=\{(354),(2354),(12)(354),(12354),(13542),(1354)\},\\
S_{44}&=\{(45),(23)(45),(12)(45),(123)(45),(132)(45),(13)(45)\}.
\end{split}
\ee
We see that all classes possess six elements.

\subsection{Operators $\opE_{ij}^{\alpha}$}
\label{A2}
\begin{notation}
Let $G$ be a finite group of order $\left\vert G\right\vert =n$ which has $r$
classes of conjugated elements. Then $G$ has exactly $r$ inequivqlent,
irreducible representations, in particular $G$ has exactly $r$ inequivqlent,
irreducible matrix representations. Let 
\be
D^{\alpha }:G\rightarrow \sH(V^{\alpha }),\qquad \alpha =1,2,....,r,\qquad
\dim V^{\alpha }=w^{\alpha }
\ee
be all inequivalent, irreducible representations of $G$ and let chose these
representations to be all unitary (always possible) i.e. 
\be
D^{\alpha }(g)=(D_{ij}^{\alpha }(g)),\qquad 
(D_{ij}^{\alpha }(g))^{+}=(D_{ij}^{\alpha }(g))^{-1},
\ee
where $i,j=1,2,....,w^{\alpha }$ and $V^{\alpha }$ are corresponding representation spaces.
\end{notation}

Similarly by  $D:G\rightarrow \sH(V)$ we denote the complex
finite-dimensional representation of the finite group $G$ in a complex
linear space $V$ ($\dim V=w)$, 

\begin{definition}
Let $D:G\rightarrow Hom(V)$ be an unitary representation of a finite group $G
$ and let $D^{\alpha }:G\rightarrow \sH(V^{\alpha })$ be all inequivalent,
irreducible representations of $G$ . Define 
\be
\opE_{ij}^{\alpha }=\frac{d_{\alpha }}{n!}\sum_{g\in G}D_{ij}^{\alpha
}(g)D(g), 
\ee
where $\alpha =1,2,...,r,\quad i,j=1,2,..,w^{\alpha },\quad
\opE_{ij}^{\alpha }\in \sH(V)$.
\end{definition}

\bigskip 

For these operators we have 

\begin{theorem}
The operators $\opE_{ij}^{\alpha }$ satisfie the following matrix composition
rule%
\be
\opE_{ij}^{\alpha }\opE_{kl}^{\beta }=\delta ^{\alpha \beta }\delta
_{jk}\opE_{il}^{\alpha },
\ee
\ \ \ \ in particular $\opE_{ii}^{\alpha }$ are orthogonal projections.
\end{theorem}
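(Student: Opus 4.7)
The plan is to compute $\opE_{ij}^{\alpha}\opE_{kl}^{\beta}$ directly from the definition and then invoke the Schur orthogonality relations for the irreducible matrix coefficients $D_{ij}^{\alpha}(g)$. This is the standard route for establishing that the $\opE_{ij}^{\alpha}$ behave like matrix units inside $\sH(V)$.

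First I would expand the product as a double sum
\begin{equation*}
\opE_{ij}^{\alpha}\opE_{kl}^{\beta} = \frac{d_{\alpha}d_{\beta}}{(n!)^2}\sum_{g,h\in G} D_{ij}^{\alpha}(g)\, D_{kl}^{\beta}(h)\, D(g)D(h).
\end{equation*}
Using that $D$ is a representation, $D(g)D(h)=D(gh)$, and then performing the change of summation variable $g'=gh$ (so $g=g'h^{-1}$, with $g'$ and $h$ ranging independently over $G$), I would rewrite the right-hand side as
\begin{equation*}
\frac{d_{\alpha}d_{\beta}}{(n!)^2}\sum_{g'\in G}\Bigl(\sum_{h\in G} D_{ij}^{\alpha}(g'h^{-1})\, D_{kl}^{\beta}(h)\Bigr) D(g').
\end{equation*}

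Next I would use the homomorphism property of $D^{\alpha}$ to split $D_{ij}^{\alpha}(g'h^{-1}) = \sum_m D_{im}^{\alpha}(g')\, D_{mj}^{\alpha}(h^{-1})$, and then unitarity $D_{mj}^{\alpha}(h^{-1}) = \overline{D_{jm}^{\alpha}(h)}$. The inner $h$-sum then becomes $\sum_m D_{im}^{\alpha}(g') \sum_h \overline{D_{jm}^{\alpha}(h)} D_{kl}^{\beta}(h)$, which is exactly the setting for Schur's orthogonality relations for matrix coefficients:
\begin{equation*}
\frac{1}{n!}\sum_{h\in G} \overline{D_{jm}^{\alpha}(h)}\, D_{kl}^{\beta}(h) = \frac{1}{d_{\alpha}}\,\delta^{\alpha\beta}\delta_{jk}\delta_{ml}.
\end{equation*}

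Substituting this in collapses the $m$-sum to $m=l$ and the $\alpha$-$\beta$ labels to a single $\alpha$, leaving
\begin{equation*}
\opE_{ij}^{\alpha}\opE_{kl}^{\beta} = \delta^{\alpha\beta}\delta_{jk}\cdot \frac{d_{\alpha}}{n!}\sum_{g'\in G} D_{il}^{\alpha}(g')\, D(g') = \delta^{\alpha\beta}\delta_{jk}\,\opE_{il}^{\alpha},
\end{equation*}
which is the claim. The immediate consequence $\opE_{ii}^{\alpha}\opE_{ii}^{\alpha}=\opE_{ii}^{\alpha}$ gives idempotency, and a short computation with the unitarity of $D^{\alpha}$ shows hermiticity $(\opE_{ii}^{\alpha})^{+}=\opE_{ii}^{\alpha}$ (since $D$ itself is unitary), so the $\opE_{ii}^{\alpha}$ are orthogonal projections. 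The only substantive input is Schur orthogonality; there is no real obstacle beyond carefully bookkeeping the index relabelling and the complex conjugate that arises from unitarity, which is where sign/index errors are easiest to make.
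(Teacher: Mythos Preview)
Your proof is correct and follows essentially the same route as the paper: expand the double sum, change variables in the group summation, split one matrix coefficient via the homomorphism property, and apply Schur orthogonality to collapse everything to $\opE_{il}^{\alpha}$. The only cosmetic difference is that the paper substitutes $u=gh$ and expands $D_{kl}^{\beta}(g^{-1}u)$ (using the $g^{-1}$ form of orthogonality) whereas you substitute $g'=gh$ and expand $D_{ij}^{\alpha}(g'h^{-1})$ (using the complex-conjugate form); these are equivalent under unitarity.
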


\begin{proof}
\be
\opE_{ij}^{\alpha }\opE_{kl}^{\beta }=\frac{(d_{\alpha })^{2}}{(n!)^{2}}(\sum_{g\in
G}D_{ij}^{\alpha }(g)D(g))(\sum_{h\in G}D_{kl}^{\beta }(h)D(h))
=\frac{(d_{\alpha })^{2}}{(n!)^{2}}\sum_{g,h\in G}D_{ij}^{\alpha
}(g)D_{kl}^{\beta }(h)D(gh).
\ee
Now if we substitute $gh=u\Rightarrow h=g^{-1}u$ we get
\be
\begin{split}
\opE_{ij}^{\alpha }\opE_{kl}^{\beta }&=\frac{(d_{\alpha })^{2}}{(n!)^{2}}\sum_{u,g\in G}\sum_{p}D_{ij}^{\alpha
}(g)D_{kp}^{\alpha }(g^{-1})D_{pl}^{\beta }(u)D(u)=\\
&=\frac{(d_{\alpha })^{2}}{(n!)^{2}}\sum_{p}(\sum_{g\in G}D_{ij}^{\alpha
}(g)D_{kp}^{\beta }(g^{-1}))\sum_{u\in G}D_{pl}^{\alpha }(u)D(u).
\end{split}
\ee
Now we use the orthogonality relation for irreps $D^{\alpha
}(g)=(D_{ij}^{\alpha }(g))$%
\be
\frac{1}{n!}\sum_{h\in G}D_{ij}^{\alpha }(g)D_{kp}^{\beta }(g^{-1})=\frac{1}{%
d_{\alpha }}\delta ^{\alpha \beta }\delta _{ip}\delta _{jk}.
\ee
which leads to%
\be
\begin{split}
\opE_{ij}^{\alpha }\opE_{kl}^{\beta }&=\frac{d_{\alpha }}{n!}\sum_{p}(\delta
^{\alpha \beta }\delta _{ip}\delta _{jk})\sum_{u\in G}D_{pl}^{\alpha
}(u)D(u)=\delta ^{\alpha \beta }\delta _{jk}\sum_{u\in G}D_{il}^{\alpha
}(u)D(u)=\\
&=\delta ^{\alpha \beta }\delta _{jk}\sum_{u\in G}D_{il}^{\alpha
}(u)D(u)=\delta ^{\alpha \beta }\delta _{jk}\opE_{il}^{\alpha }.
\end{split}
\ee
\end{proof}

\begin{remark}
From this proof it follows that the matrix multiplication rule that satisfies
operators $\opE_{il}^{\alpha }$ is the consequence of the fundamental
orthogonality relation for irreps $D^{\alpha }(g)=(D_{ij}^{\alpha }(g)).$
\end{remark}

\newpage

\end{document}